\documentclass[journal]{IEEEtran}
\usepackage{cite}
\usepackage{amsmath,amssymb,amsfonts}
\usepackage{amsthm}
\usepackage{algorithm}
\usepackage{algorithmic}
\usepackage{graphicx}
\usepackage{booktabs}
\usepackage{array}
\usepackage{xcolor}
\usepackage{tikz}
\usetikzlibrary{arrows.meta,positioning,calc}
\usepackage[hidelinks]{hyperref}

\newtheorem{theorem}{Theorem}

\newtheorem{proposition}{Proposition}

\theoremstyle{definition}
\newtheorem{definition}{Definition}

\newcommand{\E}{\mathcal{E}}
\newcommand{\M}{\mathcal{M}}
\newcommand{\Pset}{\mathcal{P}}
\newcommand{\F}{\mathcal{F}}
\newcommand{\Th}{\Theta}
\newcommand{\R}{\mathbb{R}}
\newcommand{\tr}{^{\top}}

\begin{document}

\title{Minimum Bounding Sets for N-1-1 Branch Contingencies Under Thermal and Voltage Limits in Linear Network Models}

\author{Timothy~N.~Thacker%
\thanks{T.~N.~Thacker is with the Min H. Kao Department of Electrical Engineering and Computer Science, The University of Tennessee, Knoxville, TN 37996 USA, and with CURENT (Center for Ultra-Wide-Area Resilient Electric Energy Transmission Networks) (e-mail: tthacke3@vols.utk.edu).}}

\markboth{}{Thacker: Exact Minimum Bounding Sets for N-1-1 Branch Contingencies}

\date{October 2026}

\maketitle

\begin{abstract}
N-1-1 contingency analysis studies the loss of one element followed by system adjustments and the loss of a second. The number of contingency pairs increases quadratically with the number of applicable elements, and traditional ranking-based screening methods cannot guarantee that omitted pairs are bounded. This paper determines the minimum bounding set of two-branch-outage pairs: the smallest set that can reproduce the worst value of every violated thermal and voltage limit of a linear network model. In the DC model, effects of an outage pair can be quantified by the extent of the coupling between the outages. The paper measures this by the cosine between the two outages in the network's cycle space, with a bound that splits into one term per outage and is within a near-ideal factor of $\sqrt{1+\varepsilon}$. This argument extends to every linear model in which an outage subtracts a positive semi-definite low-rank term from a symmetric positive definite matrix, with a tight bound and a collapse test. A model with voltage magnitudes is extracted by performing one Newton step on the energy function of the lossless network in rectangular coordinates. A best-first search built on these bounds reproduced brute force on public systems of up to 2{,}869 buses with 130 to 3{,}800 times fewer pair evaluations for flows and 170 to 23{,}700 times fewer for voltages. Gaps in translation to nonlinear models are explored on the IEEE 118-bus system.
\end{abstract}

\begin{IEEEkeywords}
Contingency analysis, contingency screening, N-1-1, branch outages, distribution factors, voltage limits, transmission planning.
\end{IEEEkeywords}

\section{Introduction}

\IEEEPARstart{N}{-1-1} contingency analysis, required in North America by NERC TPL-001-5.1 and formalized as category P6, asks the planner to identify the events ''expected to produce more severe system impacts,'' with a documented rationale (Requirement R3.4) \cite{nerc_tpl001}. P6 contingencies consider elements from transmission circuits, transformers, shunt devices, and DC poles. This paper only addresses sequences of two-branch outages, circuits and transformers, a subset that already shows the full combinatorial difficulty of the N-1-1 problem.

The evaluation of a subset can be traced back to the contingency selection algorithms of the 1970s and 1980s \cite{mikolinnas1981}. Such traditional approaches aim to rank outage pairs using single-outage severity, as with $|\E|$ eligible elements there are $|\E|(|\E|-1)$ ordered sequences. While single-outage severity can reduce the scope, it cannot bound what it skips and Section~\ref{sec:unsound} shows that a pair can overload a branch that neither single outage affects at all.

This paper asks whether the selection can be made exact: can one compute, defensibly and consistently, the smallest set of branch-outage pairs whose study bounds every other pair, without enumerating the full set? The results of this work indicate that for voltage and thermal limits in linear network models, the answer is yes. System adjustments between outages are included for thermal limit evaluation. The DC model is used for thermal criteria; voltage criteria are evaluated using a symmetric linear AC model where each outage subtracts a positive semi-definite low-rank term from the base system matrix. In both cases the sets are exact for the model and provide a partial answer for the AC equations, with the gaps between DC and AC formulation explored in Section~\ref{sec:ac}.

\subsection{Related Work}

The N-2 screening of Turitsyn and Kaplunovich asserts the pair response as the two single-outage responses plus an interference term, filters candidate pairs with bounds constructed from Line Outage Distribution Factors (LODFs), and does not miss a violating pair in the DC model \cite{turitsyn2012,kaplunovich2013,kaplunovich2016}. Multiple-outage distribution factors follow from the compensation formula \cite{guler2007}. Kaiser, Strake, and Witthaut validated concerns of single-outage ranking limitations for determining violating pairs by showing that two simultaneous failures can amplify, attenuate, or reverse the individual effects and quantified this behavior using the geometric mean of the mutual LODFs \cite{kaiser2020}. Cycle-space and spanning-forest descriptions of outage factors appear in \cite{ronellenfitsch2017,guo2020}, and Lyons demonstrated that the transfer-current matrix is an orthogonal projector \cite{lyons2003}.

\subsection{Contributions}

\begin{enumerate}
\item A separable bound on the interaction between two outages, with constant $|c|/\sqrt{1-|c|}$ in the cycle-space cosine $c$, tight to within $\sqrt{1+\varepsilon}$, with a count of strongly coupled pairs and an extension to multi-branch elements (Section~\ref{sec:algebra}).
\item A best-first search that returns the exact worst pair of every criterion, is instance-optimal for its bounds, and has an a priori work bound (Section~\ref{sec:search}).
\item A Helly-type bound on the second contingencies that decide each first contingency under adjustment (Section~\ref{sec:redispatch}).
\item A class of linear outage models, containing the DC model, in which each outage subtracts a positive semi-definite low-rank term from a symmetric positive definite matrix. For this class the paper gives an exact pair formula, an exact collapse test, and a tight separable interaction bound (Section~\ref{sec:voltage}).
\item A member of this class with voltage magnitudes, built from the energy function of the lossless network in rectangular coordinates (Section~\ref{sec:emodel}).
\item Case studies on five public systems, including a comparison with the filter of \cite{kaplunovich2016} (Section~\ref{sec:cases}).
\item An AC study of all pairs of the IEEE 118-bus system, which compares the voltage model with the power-mismatch linearization and the decoupled model (Sections~\ref{sec:ac} and~\ref{sec:acresults}).
\end{enumerate}

\section{Problem Formulation}\label{sec:problem}

Consider a transmission network with $n$ buses. Branch $\ell$ has susceptance $b_\ell>0$ and signed incidence row $a_\ell\tr$. The incidence matrix is $A$, $W=\mathrm{diag}(b_\ell)$, and $L=A\tr WA$ is the DC Laplacian. The set of contingency elements is $\E$ and the set of monitored branches is $\M$. Every element is a branch: a transmission circuit or a transformer. Shunt devices and single poles of DC lines are not considered. Each element is a single branch until Section~\ref{sec:multi}.

A transmission planning study uses a finite list of study cases, $\Th=\{\theta_1,\dots,\theta_s\}$. Each thermal criterion $\chi=(m,\sigma)$, with a flow direction $\sigma=\pm1$, gives the normalized value $\varphi_\chi(p,\theta)=\sigma f^{p}_m/R_m$ of pair $p=(k,j)$ in case $\theta$, where $f^p_m$ is the post-pair flow and the rating is $R_m$. A pair violates $\chi$ when $\varphi_\chi>1$. Voltage criteria, defined in Section~\ref{sec:voltage}, have the same form with the bus voltage replacing the flow.

\begin{definition}\label{def:bounding}
A set $S$ of sequences \emph{bounds} a criterion family $\F$ if, for every $\theta\in\Th$ and every $\chi\in\F$ with $\max_p\varphi_\chi(p,\theta)>1$, the maximum of $\varphi_\chi(\cdot,\theta)$ over $S$ equals its maximum over all sequences.
\end{definition}

A set bounds $\F$ exactly when it contains a maximizer of every violated criterion. Thus a \emph{minimum bounding set} $S^*$ consists of one maximizer per violated criterion when maximizers are unique, so that $|S^*|$ is at most the number of violated criteria, and otherwise a minimum hitting set of the sets of tied maximizers. Considering ties, it reproduces the worst value of every violated criterion, which is the scenario that corrective actions must cover. The set bounds the criteria of the model in which $\varphi$ is computed, the DC model for thermal criteria, and the energy model of Section~\ref{sec:emodel} for voltage criteria. It must be recalculated after any corrective action changes the network. Section~\ref{sec:acresults} aims to measure how far these sets are from bounding the AC quantities.

\section{Outage Algebra in the Cycle Space}\label{sec:algebra}

\subsection{Known Pair Algebra}

The DC outage algebra is known and stated in the cycle-space form of \cite{ronellenfitsch2017} with \cite{guler2007,kaiser2020,lyons2003} providing the proofs. Let $\Phi_{mk}=b_ma_m\tr L^{+}a_k$ be the branch-to-branch Power Transfer Distribution Factor (PTDF) and $d_{mk}=\Phi_{mk}/(1-\Phi_{kk})$ the LODF. The transfer-current matrix $\Pi=W^{1/2}AL^{+}A\tr W^{1/2}$ is the orthogonal projector onto the range of $W^{1/2}A$, with trace $n-1$ \cite{lyons2003}, so $Q=I-\Pi$ projects onto the weighted cycle space and $\Phi_{kk}=\Pi_{kk}\in(0,1]$, with equality exactly for bridges.  With $q_k=Qe_k$ define the cycle-space cosine, the cycle exposure of a monitored branch, and the outage stress of an element,
\begin{equation}
c_{kj}=\frac{Q_{kj}}{\sqrt{Q_{kk}Q_{jj}}},\quad \xi_m=\sqrt{b_mQ_{mm}},\quad S_k=\frac{f_k}{\sqrt{b_kQ_{kk}}}.
\end{equation}
The single-outage change on $m$ is
\begin{equation}\label{eq:single}
\Delta^k_m=d_{mk}f_k=-\xi_m\,c_{mk}\,S_k .
\end{equation}
If removing $k$ and $j$ leaves the network connected, the compensation formula \cite{guler2007,kaiser2020} gives, for $m\notin\{k,j\}$ and $\kappa_{kj}=d_{kj}d_{jk}$,
\begin{align}
\Delta^{kj}_m&=\frac{(d_{mk}+d_{mj}d_{jk})f_k+(d_{mj}+d_{mk}d_{kj})f_j}{1-\kappa_{kj}}\label{eq:pair}\\
&=-\sqrt{b_m}\,\langle q_m,\,UC^{-1}s\rangle ,\label{eq:pairQ}
\end{align}
where $U=[q_k/\sqrt{Q_{kk}}\;\;q_j/\sqrt{Q_{jj}}]$ has Gram matrix $C=[1\;c;\,c\;1]$ with $c=c_{kj}$, and $s=(S_k,S_j)\tr$. The second line is the cycle-space form \cite{ronellenfitsch2017}. In this notation $\kappa_{kj}=(Q_{kj}/Q_{jj})(Q_{jk}/Q_{kk})=c_{kj}^2$, so the predictor $\sqrt{\kappa_{kj}}$ of collective effects identified in \cite{kaiser2020} is the absolute cosine between two cycle-space vectors, and $\kappa_{kj}=1$ exactly when $\{k,j\}$ is a cut set. Without adjustment the pair change is symmetric in $k$ and $j$, so only unordered pairs require study.

\subsection{The Interaction Bound}

The novel result of this section bounds the part of the pair change that the two single changes do not quantify.

\begin{proposition}[Interaction bound]\label{prop:residual}
Let $\Delta^\times_{kj,m}=\Delta^{kj}_m-\Delta^k_m-\Delta^j_m$ and $c=c_{kj}$. Then
\begin{equation}\label{eq:residual}
|\Delta^\times_{kj,m}|\le\frac{|c|}{\sqrt{1-|c|}}\,\xi_m\|s\|\le\frac{|c|}{\sqrt{1-|c|}}\,\xi_m\bigl(|S_k|+|S_j|\bigr),
\end{equation}
and $|\Delta^{kj}_m|\le\xi_m\sqrt{\mathcal{W}_{kj}}$ with $\mathcal{W}_{kj}=s\tr C^{-1}s\le\|s\|^2/(1-|c|)$.
\end{proposition}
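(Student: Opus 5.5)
The plan is to express every quantity in the cycle-space form \eqref{eq:pairQ} and then apply a single Cauchy--Schwarz step together with a $2\times2$ eigenvalue computation. Write $u_k=q_k/\sqrt{Q_{kk}}$ and $u_j=q_j/\sqrt{Q_{jj}}$, so that $U=[u_k\;u_j]$.

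First I will rewrite the single-outage change \eqref{eq:single} in the same form as the pair change. Since $Q$ is an orthogonal projector, $\langle q_m,q_k\rangle=e_m\tr Q\tr Q e_k=Q_{mk}$. Hence
\[
\xi_m c_{mk}=\sqrt{b_mQ_{mm}}\,\frac{Q_{mk}}{\sqrt{Q_{mm}Q_{kk}}}=\sqrt{b_m}\,\langle q_m,u_k\rangle ,
\]
which gives $\Delta^k_m=-\sqrt{b_m}\langle q_m,u_k\rangle S_k$. Summing over the two outages gives $\Delta^k_m+\Delta^j_m=-\sqrt{b_m}\langle q_m,Us\rangle$. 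Subtracting this from \eqref{eq:pairQ} yields the exact interaction
\[
\Delta^\times_{kj,m}=-\sqrt{b_m}\,\langle q_m,\,U(C^{-1}-I)s\rangle .
\]
Here the hypothesis that removing $k,j$ leaves the network connected ensures $|c|<1$, so $C$ is invertible.

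Next I apply Cauchy--Schwarz with $\|q_m\|^2=Q_{mm}$, again using that $Q$ is a projector. Since $\|Uv\|^2=v\tr U\tr Uv=v\tr Cv$, this gives
\[
|\Delta^\times_{kj,m}|\le \xi_m\sqrt{v\tr Cv}\quad\text{with } v=(C^{-1}-I)s.
\]
Because $C$ commutes with $C^{-1}-I$,
\[
v\tr Cv=s\tr\bigl(C^{-1}-2I+C\bigr)s .
\]
The matrix $C$ has eigenvalues $\lambda=1\pm c$, and on these the quadratic form has eigenvalues $(1-\lambda)^2/\lambda=c^2/(1\pm c)$. The larger of the two is $c^2/(1-|c|)$, so
\[
v\tr Cv\le \frac{c^2}{1-|c|}\,\|s\|^2 .
\]
Taking square roots gives the first inequality in \eqref{eq:residual}. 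The second inequality is simply $\|s\|_2\le\|s\|_1$.

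For the pair bound I repeat the same argument with $v=C^{-1}s$. Cauchy--Schwarz gives $|\Delta^{kj}_m|\le\xi_m\sqrt{s\tr C^{-1}CC^{-1}s}=\xi_m\sqrt{\mathcal W_{kj}}$. Since the smallest eigenvalue of $C$ is $1-|c|$, we have $\mathcal W_{kj}\le\|s\|^2/(1-|c|)$.

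The only step needing care is the bookkeeping that identifies $\xi_mc_{mk}$ with $\sqrt{b_m}\langle q_m,u_k\rangle$, which rests on $Q=Q\tr=Q^2$. Once that identification is made, the rest is the eigen-decomposition of the $2\times2$ Gram matrix, and no real obstacle remains.
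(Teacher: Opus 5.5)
Your proposal is correct and follows the paper's proof essentially step for step. You write the interaction as $-\sqrt{b_m}\langle q_m,U(C^{-1}-I)s\rangle$ and apply Cauchy--Schwarz with $\|Uy\|^2=y\tr Cy$. You then read off the eigenvalues $c^2/(1\pm c)$ of $C^{-1}-2I+C$, and use $\lambda_{\min}(C)=1-|c|$ for the cap. Your explicit check that $\xi_m c_{mk}=\sqrt{b_m}\,\langle q_m,q_k\rangle/\sqrt{Q_{kk}}$ via $Q=Q\tr=Q^2$ only spells out what the paper takes directly from the single-outage formula.
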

\begin{proof}
By (\ref{eq:single}) and (\ref{eq:pairQ}), $\Delta^\times_{kj,m}=-\sqrt{b_m}\langle q_m,z\rangle$ with $z=U(C^{-1}-I)s$. Because $\|Uy\|^2=y\tr Cy$, $\|z\|^2=s\tr(C^{-1}-2I+C)s$, and at the eigenvalues $1\pm c$ of $C$ the function $\mu\mapsto(1-\mu)^2/\mu$ equals $c^2/(1\pm c)$. Hence $\|z\|\le|c|\|s\|/\sqrt{1-|c|}$, and the Cauchy--Schwarz inequality with $\sqrt{b_m}\|q_m\|=\xi_m$ gives the first bound. The cap follows in the same way from $\|UC^{-1}s\|^2=s\tr C^{-1}s$ and $\lambda_{\min}(C)=1-|c|$.
\end{proof}

The second form of (\ref{eq:residual}) splits into terms for $k$ and $j$, as the search requires. The constant is nearly tight.

\begin{theorem}[Tightness]\label{thm:tight}
Let $C_2^*(\varepsilon)$ and $C_1^*(\varepsilon)$ be the smallest constants with $|\Delta^\times_{kj,m}|\le C\xi_m\|s\|$ and $|\Delta^\times_{kj,m}|\le C\xi_m(|S_k|+|S_j|)$ for every configuration of unit vectors $q_m/\|q_m\|$, $u_k$, $u_j$ with a positive semi-definite Gram matrix and $|c_{kj}|\le\varepsilon$. Then $C_2^*=\varepsilon/\sqrt{1-\varepsilon}$ and $\varepsilon/\sqrt{1-\varepsilon^2}\le C_1^*\le\varepsilon/\sqrt{1-\varepsilon}$.
\end{theorem}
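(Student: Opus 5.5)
The upper bounds come directly from Proposition~\ref{prop:residual}: for $|c|\le\varepsilon$ the factor $|c|/\sqrt{1-|c|}$ is increasing in $|c|$, and $\|s\|\le|S_k|+|S_j|$. This gives $C_2^*\le\varepsilon/\sqrt{1-\varepsilon}$ and $C_1^*\le\varepsilon/\sqrt{1-\varepsilon}$. What remains is to exhibit configurations that attain, or approach, the two lower bounds. I will work in the abstract setting of the theorem. Put $v=q_m/\|q_m\|$, $U=[u_k\;u_j]$ with Gram matrix $C$, and $z=U(C^{-1}-I)s$. Then
\[
|\Delta^\times_{kj,m}|/\xi_m=|\langle v,z\rangle| .
\]
The unit vector $v$ is free, subject only to the Gram matrix of $(v,u_k,u_j)$ being positive semi-definite, so it may be placed anywhere in the span of $u_k,u_j$. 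Choosing $v=z/\|z\|$ makes Cauchy--Schwarz an equality, and the supremum becomes $\sup_s\|z\|$ under the relevant norm on $s$.

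For $C_2^*$, I will take $c=\varepsilon$ and let $s$ be the unit eigenvector $(1,-1)/\sqrt2$ of $C$ for the eigenvalue $1-\varepsilon$. From the proof of Proposition~\ref{prop:residual},
\[
\|z\|^2=s\tr(C^{-1}-2I+C)s=\frac{\varepsilon^2}{1-\varepsilon},
\]
so $\|z\|=\varepsilon/\sqrt{1-\varepsilon}$ with $\|s\|=1$, and equality holds. If $c=-\varepsilon$ is needed instead, the other eigenvector gives the same value, so $C_2^*=\varepsilon/\sqrt{1-\varepsilon}$.

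For the lower bound on $C_1^*$, I need $\sup\|z\|$ over $|S_k|+|S_j|=1$. Since $\|z\|^2$ is a convex quadratic form in $s$, it is maximized over the $\ell_1$ ball at a vertex, for example $s=e_1$. There
\[
\|z\|^2=(C^{-1})_{11}-2+1=\frac{1}{1-c^2}-1=\frac{c^2}{1-c^2},
\]
which at $|c|=\varepsilon$ gives $\|z\|=\varepsilon/\sqrt{1-\varepsilon^2}$. This yields the stated lower bound, and the vertex argument suggests it is actually the exact value, but the theorem only claims the inequality.

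The main point to check is feasibility of the extremal configuration. The three unit vectors $v,u_k,u_j$ must have a positive semi-definite Gram matrix, and $u_k,u_j$ must realize the cosine $\pm\varepsilon$. Both hold in $\R^2$: take two unit vectors at angle $\arccos\varepsilon$ and $v$ in their plane, so the Gram matrix has rank two and is positive semi-definite. One should also confirm that the theorem quantifies over abstract configurations rather than realizable networks, which is how I read ``every configuration of unit vectors.'' Under that reading no graph construction is needed.
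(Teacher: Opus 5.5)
Your proof is correct and is essentially the paper's. The upper bounds come from Proposition~\ref{prop:residual}, and your extremal choices are exactly the paper's configurations up to the sign of $v$: the eigenvector $s=(1,-1)\tr/\sqrt2$ with $v=z/\|z\|=Us/\sqrt{1-\varepsilon}$, and $s=e_1$ with $v=z/\|z\|$, which gives $c_{mk}=0$ and $|c_{mj}|=\sqrt{1-\varepsilon^2}$.

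Your convexity remark does more than suggest that the lower bound is the exact value. Since $s\tr(C^{-1}-2I+C)s$ is a positive semi-definite form, its maximum over $|S_k|+|S_j|=1$ is $c^2/(1-c^2)$. Together with monotonicity in $|c|$, this proves $C_1^*=\varepsilon/\sqrt{1-\varepsilon^2}$ exactly in this abstract setting.
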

\begin{proof}
The upper bounds are Proposition~\ref{prop:residual}. For $C_2^*$ take $c=\varepsilon$, $s=(1,-1)\tr/\sqrt2$, an eigenvector of $C$ for $1-c$, and $q_m/\|q_m\|=Us/\sqrt{1-c}$. Then $z=\frac{c}{1-c}Us$ and $\langle q_m/\|q_m\|,z\rangle=c/\sqrt{1-c}$. For $C_1^*$ take $c_{mk}=0$, $c=\varepsilon$, $c_{mj}=\sqrt{1-\varepsilon^2}$, and $S_j=0$.
\end{proof}

While extremal configurations have not been shown to arise from a network with positive susceptances, they are nearly attained in practice (Section~\ref{sec:struct}).

\subsection{How Many Pairs Are Strongly Coupled}

\begin{theorem}[Coupling budget]\label{thm:budget}
For every non-bridge $k$, $\sum_{j\ne k}\kappa_{kj}(1-\Phi_{jj})=\Phi_{kk}$. Then for $\varepsilon,\delta\in(0,1)$,
\begin{equation}\label{eq:count}
\bigl|\{(k,j):\kappa_{kj}>\varepsilon^2,\ 1-\Phi_{jj}\ge\delta\}\bigr|<\frac{n-1}{\varepsilon^2\delta}.
\end{equation}
\end{theorem}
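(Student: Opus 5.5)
We will prove the identity first, by rewriting everything in terms of $Q$, and then get the count by summing over $k$ and applying a Markov-type argument.

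\emph{The identity.} By the definitions, $\kappa_{kj}=Q_{kj}^2/(Q_{kk}Q_{jj})$ and $1-\Phi_{jj}=1-\Pi_{jj}=Q_{jj}$. Hence each term of the sum is $\kappa_{kj}(1-\Phi_{jj})=Q_{kj}^2/Q_{kk}$, and the sum over $j\ne k$ equals
\[
\frac{1}{Q_{kk}}\Bigl(\sum_j Q_{kj}^2-Q_{kk}^2\Bigr).
\]
Since $Q$ is a symmetric idempotent, $\sum_j Q_{kj}^2=(Q^2)_{kk}=Q_{kk}$. The sum therefore equals $(Q_{kk}-Q_{kk}^2)/Q_{kk}=1-Q_{kk}=\Phi_{kk}$. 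The division by $Q_{kk}$ is legitimate because $k$ is a non-bridge, so $Q_{kk}>0$.

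\emph{The count.} For every pair in the set on the left of (\ref{eq:count}), the term $\kappa_{kj}(1-\Phi_{jj})$ exceeds $\varepsilon^2\delta$. Every such pair has $\kappa_{kj}>0$, so $Q_{kj}\ne0$. Then $Q_{kk}>0$ by positive semidefiniteness, which means $k$ is a non-bridge and the identity applies to it.

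Fix such a $k$ and let $N_k$ be the number of qualifying $j$. All terms in the identity are nonnegative, so the identity gives $\varepsilon^2\delta\,N_k<\Phi_{kk}$ whenever $N_k\ge1$. Summing over $k$ yields
\[
\varepsilon^2\delta\sum_k N_k<\sum_{k\text{ non-bridge}}\Phi_{kk}\le\operatorname{tr}\Pi=n-1,
\]
where the trace value $n-1$ is the one quoted from \cite{lyons2003}. The strict inequality survives the summation provided at least one $N_k\ge1$. If the set is empty, the bound $0<(n-1)/(\varepsilon^2\delta)$ holds trivially, since $n\ge2$ whenever branches exist.

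\emph{The delicate points.} The only places needing care are the strictness of the final inequality and the restriction of the identity to non-bridges. Both are handled by the observation that qualifying pairs force $Q_{kk}>0$ and hence force $k$ to be a non-bridge. The counted pairs are ordered pairs, which is consistent with summing $N_k$ over all $k$.
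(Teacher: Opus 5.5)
Your proof is correct and matches the paper's argument: each term becomes $Q_{kj}^2/Q_{kk}$, idempotence of $Q$ gives $\sum_j Q_{kj}^2=Q_{kk}$, and the count follows because each counted $j$ contributes more than $\varepsilon^2\delta$ while $\sum_k\Phi_{kk}\le\mathrm{tr}\,\Pi=n-1$. Your treatment of the strict inequality and of why only non-bridge $k$ can appear spells out details the paper's three-line proof leaves implicit.
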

\begin{proof}
$Q^2=Q$ gives $\sum_jQ_{jk}^2=Q_{kk}$. Dividing by $Q_{kk}$ and using $\kappa_{kj}Q_{jj}=Q_{jk}^2/Q_{kk}$ gives the identity. Each counted $j$ contributes more than $\varepsilon^2\delta$, and $\sum_k\Phi_{kk}\le\mathrm{tr}\,\Pi=n-1$.
\end{proof}

An element cannot be strongly coupled to many partners that lie on cycles. Partners that lie on almost no cycle ($1-\Phi_{jj}<\delta$) escape the budget, and on the small test systems they are numerous.

\subsection{Elements That Trip Several Branches}\label{sec:multi}

If element $k$ removes the branch set $K$, with disjoint sets for different elements, let $U_k=Q_{:,K}Q_{KK}^{-1/2}$, $s_k=Q_{KK}^{-1/2}W^{-1/2}_{KK}f_K$, $S_k=\|s_k\|$, and $c_{kj}=\|U_k\tr U_j\|$, the cosine of the smallest principal angle between the cycle-space subspaces of the two elements. Because $Q$ is a projector, $U_k\tr U_k=I$, (\ref{eq:pairQ}) holds with $U=[U_k\;U_j]$ and $C=[I\;G;\,G\tr\;I]$, $G=U_k\tr U_j$, and the eigenvalues of $C$ are $1\pm\sigma_i(G)$ and possibly 1. Proposition~\ref{prop:residual} therefore holds verbatim with $|c|$ replaced by $c_{kj}$. On about 3{,}900 random pairs of two-branch elements in each of the three smaller systems the largest ratio of interaction term to bound was 0.998.

\subsection{Why Ranking Rules Cannot Bound}\label{sec:unsound}

\begin{figure}[!t]
\centering
\begin{tikzpicture}[
  bus/.style={circle,draw,fill=white,inner sep=1.3pt,minimum size=5pt},
  br/.style={line width=0.7pt},
  lab/.style={font=\footnotesize},
  small/.style={font=\scriptsize},
  >={Stealth[length=1.8mm]}]
  \node[bus,label={[lab]left:$s$}] (S) at (0,0) {};
  \node[bus,label={[lab]above left:$a$}] (A) at (2.0,1.0) {};
  \node[bus,label={[lab]below left:$b$}] (B) at (2.0,-1.0) {};
  \node[bus,label={[lab]right:$t$}] (T) at (4.0,0) {};
  \draw[br,color=blue!70!black] (S) to[out=80,in=100,looseness=1.45] node[small,above,pos=0.5]{$k$ ($b=4$)} (T);
  \draw[br] (S) -- node[small,above left,pos=0.5]{2} (A);
  \draw[br] (S) -- node[small,below left,pos=0.5]{1} (B);
  \draw[br,densely dashed] (A) -- node[small,above right,pos=0.45]{$j$ ($b=8$)} (T);
  \draw[br] (B) -- node[small,below right,pos=0.5]{4} (T);
  \draw[br,color=orange!85!black,line width=1.1pt] (A) -- node[small,right,pos=0.5,color=orange!70!black]{$m$ ($b=2$)} (B);
  \draw[->] (S) ++(-0.55,-0.45) node[small,left]{100} -- ++(0.42,0.36);
  \draw[->] (A) -- ++(-0.5,0.15) node[small,left]{26.7};
  \draw[->] (B) -- ++(0,-0.42) node[small,below]{20.0};
  \draw[->] (T) -- ++(0.45,-0.42) node[small,right]{53.3};
  \node[anchor=north,font=\scriptsize] at (2.0,-1.85) {%
    \begin{tabular}{lcccc}
    \hline
    State & Base & $-k$ & $-j$ & $-k$, then $-j$\\
    \hline
    Flow on $j$ (MW) & 0.0 & 37.8 & -- & -- \\
    Flow on $m$ (MW) & 1.7 & 1.7 & 1.7 & 30.0\\
    \hline
    \end{tabular}};
\end{tikzpicture}
\caption{A four-bus example of hidden amplification. Susceptances are in per unit on a 100-MVA base and injections in MW. Neither single outage changes the flow on $m$; the sequence $k$ then $j$ raises it from 1.7 to 30.0~MW (DC model).}
\label{fig:hidden}
\end{figure}

Contingency screening approaches built solely on single-outage impacts can miss violating pairs as two outages can have amplification or attenuation effects beyond the sum of their single effects \cite{kaiser2020}.  Fig.~\ref{fig:hidden} shows the extreme amplification case. Branch $j$ initially carries no flow, so losing it has no impact on the system. Meanwhile, $m$ is balanced with respect to $k$, and when $k$ is removed the redistributed flow is carried by $j$. The loss of $k$ and $j$ forces the flow onto $m$. Here $c_{mk}=0$ and $S_j=0$, and (\ref{eq:pairQ}) gives $\Delta^{kj}_m=\xi_mc\,c_{mj}S_k/(1-c^2)$, which grows without bound as $|c|\to1$. A rule that sees only $f_m$, $f_k$, $f_j$, $\Delta^k_m$, and $\Delta^j_m$ cannot tell $m$ from a branch on a separate loop that the pair leaves unchanged, and a rule that pairs the outages with the largest single impacts on $m$ ranks $k$ and $j$ last. A correct rule needs the mutual coupling $c_{kj}$, which (\ref{eq:pair}) computes from N-1 quantities.

\section{Best-First Search}\label{sec:search}

\subsection{Bounds and Algorithm}

Given a case and criterion $(m,\sigma)$, write $v(p)$ for $\sigma f^p_m$ and split the pairs into a strong set $\Pset_s=\{|c_{kj}|>\varepsilon\}$ and a weak set containing the rest. With $\omega=\varepsilon/\sqrt{1-\varepsilon}$, Proposition~\ref{prop:residual} gives $v(k,j)\le\sigma f_m+w_k+w_j$ for weak pairs, with
\begin{equation}\label{eq:scores}
w_k=\sigma\Delta^k_m+\omega\,\xi_m|S_k| ,
\end{equation}
and by Theorem~\ref{thm:tight} $\omega$ is optimal to within the factor $\sqrt{1+\varepsilon}$. For a strong pair $p$ the cap $b(p)$ is the smaller of $\sigma f_m+\xi_m\sqrt{\mathcal{W}_p}$ and $\sigma f_m+\sigma\Delta^k_m+\sigma\Delta^j_m+|c|\xi_m\|s\|/\sqrt{1-|c|}$. A criterion is immune if
\begin{equation}\label{eq:immunity}
|f_m|+\xi_m\max\Bigl(\sqrt{\tfrac{S_{(1)}^2+S_{(2)}^2}{1-\varepsilon}},\ \max_{p\in\Pset_s}\sqrt{\mathcal{W}_p}\Bigr)\le R_m ,
\end{equation}
where $S_{(1)}$ and $S_{(2)}$ are the two largest outage stresses. By Proposition~\ref{prop:residual} no pair can violate it. Near-radial branches, with small $\xi_m$, are discarded this way.

\begin{algorithm}[!t]
\caption{Best-first search for one criterion}\label{alg:search}
\begin{algorithmic}[1]
\REQUIRE scores $w$, strong pairs with caps $b(p)$, rating $R_m$
\STATE \textbf{if} the criterion is immune by (\ref{eq:immunity}) \textbf{return} no violation
\STATE sort $w$ and the strong caps decreasingly; heap $\gets$ the two largest weak scores; $\hat v\gets-\infty$
\LOOP
  \STATE $t\gets$ larger of $\sigma f_m+$ heap top $w_k+w_j$ and the next strong cap
  \STATE \textbf{if} $t\le\max(\hat v,R_m)$ \textbf{break}
  \STATE pop that pair $p$; for a weak pop, push its two successors
  \STATE skip $p$ if it is strong, islanding, or contains $m$
  \STATE evaluate $v(p)$ by (\ref{eq:pair}); if $v(p)>\hat v$ set $\hat v\gets v(p)$, $p^*\gets p$
\ENDLOOP
\RETURN $(\hat v,p^*)$ if $\hat v>R_m$, else no violation
\end{algorithmic}
\end{algorithm}

The weak stream enumerates pairs in decreasing order of $w_k+w_j$ by the standard heap enumeration of sorted pair sums. The two streams are merged by bound, so pairs leave the search in decreasing order of their bound.

\begin{theorem}[Exactness and instance optimality]\label{thm:correct}
Algorithm~\ref{alg:search} returns $v^*=\max_pv(p)$ and a maximizer whenever $v^*>R_m$, and otherwise reports that no pair violates the criterion. Any algorithm that accesses an instance only through the bounds and exact evaluations, and is correct on every instance consistent with them, must evaluate every pair whose bound exceeds $\max(v^*,R_m)$. Algorithm~\ref{alg:search} evaluates exactly these pairs, plus possibly pairs whose bound equals $\max(v^*,R_m)$.
\end{theorem}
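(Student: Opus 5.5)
The argument has three parts: validity of the bounds, correctness via a monotone threshold, and an adversary argument for optimality. First I would establish that every pair has a valid upper bound that the merged stream exposes in nonincreasing order. For a weak pair, Proposition~\ref{prop:residual} with $|c|\le\varepsilon$ gives
\[
v(k,j)=\sigma f_m+\sigma\Delta^k_m+\sigma\Delta^j_m+\sigma\Delta^\times_{kj,m}\le \sigma f_m+w_k+w_j ,
\]
because $\omega(|c|)=|c|/\sqrt{1-|c|}$ is increasing in $|c|$. For a strong pair, the two expressions in $b(p)$ are the cap $\xi_m\sqrt{\mathcal{W}_p}$ and the first inequality of (\ref{eq:residual}), so $v(p)\le b(p)$. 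The heap enumeration of sorted pair sums outputs the weak sums $w_k+w_j$ in nonincreasing order, since each unpopped pair is dominated by a heap element through its predecessor chain. The strong caps are sorted, and merging the two streams gives a single stream in nonincreasing bound order. Islanding pairs and pairs containing $m$ are excluded from the maximization by definition, so skipping them loses nothing. For the immunity test (\ref{eq:immunity}), note that $\|s\|^2\le S_{(1)}^2+S_{(2)}^2$. Combining this with the cap $\xi_m\|s\|/\sqrt{1-\varepsilon}$ for weak pairs and $\xi_m\sqrt{\mathcal{W}_p}$ for strong pairs shows $|f^p_m|\le R_m$ for all $p$.

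Correctness then follows from a loop invariant. At each iteration $\hat v$ is the maximum of $v$ over the evaluated pairs, and every unpopped pair has a bound at most $t$. When the loop breaks, $t\le\max(\hat v,R_m)$. Hence every unevaluated pair satisfies $v(p)\le\max(\hat v,R_m)$. If $v^*>R_m$, a maximizer can exceed $\hat v$ only if $v^*>\hat v$, but then $v^*\le\max(\hat v,R_m)$ forces $v^*\le R_m$, a contradiction. So $\hat v=v^*$ and $p^*$ is a maximizer. If instead $v^*\le R_m$, then $\hat v\le R_m$ and the algorithm reports no violation. Termination is immediate because the pair set is finite.

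For the lower bound I would use an adversary argument. Suppose an algorithm omits a pair $p$ whose bound $B(p)$ exceeds $M=\max(v^*,R_m)$. Take the instance that agrees with all the exposed data and all the performed evaluations, but has $v(p)$ set to a value in $(M,B(p)]$. This instance is consistent with the bounds, its true maximum exceeds $R_m$, and its maximizer is $p$ with a value different from anything the algorithm has seen. The algorithm behaves identically on both instances, so it errs on one of them.

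The main obstacle is justifying that such an instance is consistent with the bounds, since $v$ comes from network physics. The theorem's phrase ``accesses an instance only through the bounds and exact evaluations'' lets us treat $v$ as an arbitrary function dominated by the bounds, so I would adopt that abstract model explicitly.

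For the upper half of instance optimality, note that the algorithm pops pairs in nonincreasing bound order and stops at the first $t\le\max(\hat v,R_m)$. Since $\hat v\le v^*$, every popped pair has bound greater than $\max(\hat v,R_m)$ at the time it is popped. Once $\hat v=v^*$, any further pop has bound greater than $M$. Before that point, a popped pair with bound at most $M$ could occur only if its bound equals $M$, via ties between $\hat v<v^*$ and $R_m$. A pair with bound strictly below $M$ would follow every pair of bound at least $M$, including the maximizer, whose bound is at least $v^*$. After the maximizer is evaluated, $\hat v=v^*$ and the loop would break first. So the evaluated pairs are exactly those with bound exceeding $M$, plus possibly pairs whose bound equals $M$. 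Skipped pops count as non-evaluations, which only helps.
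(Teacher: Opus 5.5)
Your proposal is correct and follows the paper's argument. Releasing pairs in nonincreasing bound order with the break test $t\le\max(\hat v,R_m)$ gives exactness, and the lower bound is the threshold-algorithm adversary of \cite{fagin2003}: an unevaluated pair whose bound exceeds $\max(v^*,R_m)$ could take any value up to that bound. You spell this out, together with bound validity, the immunity test, and the matching upper count, in more detail than the paper's two-line proof.

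One caveat applies to both your argument and the paper's. When line~1 returns by immunity, no pair is evaluated, yet some weak sums $\sigma f_m+w_k+w_j$ may still exceed $R_m$. The ``exactly these pairs'' clause therefore holds only if the immunity certificate is counted as part of the bounds.
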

\begin{proof}
Pairs leave in decreasing order of bound, so at termination no unevaluated pair can exceed $\max(\hat v,R_m)$. The lower bound on the work is the argument of the threshold algorithm \cite{fagin2003}: an unevaluated pair whose bound exceeds $\max(v^*,R_m)$ could take any value up to its bound.
\end{proof}

Optimality is dependent on the information in the bounds as the instances in the proof do not need to come from a network. Continuing the search while $t\ge\hat v$ collects the tie set of the criterion, and continuing while $t>R_m$ lists each violating pair. Only the number of pairs evaluated depends on $\varepsilon$, as every $\varepsilon$ gives valid bounds. Thus the worst values returned are not dependent on $\varepsilon$ and the parameter is not a screening threshold. The implementation of Section~\ref{sec:cases} compares bounds and treats pairs with $1-\kappa_{kj}<10^{-9}$ as islanding scenarios.

\subsection{A Priori Work Bound}

\begin{proposition}[Work bound]\label{thm:work}
For criterion $(m,\sigma)$ let $\varphi^0=\sigma f_m/R_m$ and $a_o=\sigma\Delta^o_m/R_m$. If (i) $(w_k+w_j)/R_m\le a_k+a_j+\varrho$ for every weak pair, (ii) $\varphi^0+a_o\le1-\mu$ for every $o$, with $\mu>\varrho$, and (iii) $\sum_oa_o^2\le A^2$, then Algorithm~\ref{alg:search} evaluates at most $N(N-1)/2$ weak pairs for this criterion, with $N=\lfloor A^2/(\mu-\varrho)^2\rfloor$.
\end{proposition}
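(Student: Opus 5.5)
The plan is to show that any weak pair popped and evaluated by Algorithm~\ref{alg:search} must consist of two elements that each have a large single-outage effect $a_o$, and then to count such elements using the energy budget (iii). The argument rests on the termination test. A weak pair $(k,j)$ is popped only while its bound $t=\sigma f_m+w_k+w_j$ exceeds $\max(\hat v,R_m)\ge R_m$. Dividing by $R_m$, every evaluated weak pair satisfies $\varphi^0+(w_k+w_j)/R_m>1$. By (i) this gives $\varphi^0+a_k+a_j+\varrho>1$.

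The next step uses (ii) twice. Applied to $o=k$, it gives $\varphi^0+a_k\le1-\mu$. Hence $a_j>1-\varphi^0-a_k-\varrho\ge\mu-\varrho>0$. Symmetrically, applying (ii) to $o=j$ gives $a_k>\mu-\varrho$. So both members of every evaluated weak pair lie in
\[
B=\{o:\ a_o>\mu-\varrho\}.
\]
By (iii), $|B|(\mu-\varrho)^2<\sum_{o\in B}a_o^2\le A^2$ whenever $B$ is nonempty. Thus $|B|<A^2/(\mu-\varrho)^2$, and in particular $|B|\le N=\lfloor A^2/(\mu-\varrho)^2\rfloor$. The inequality is strict, so this holds even when the ratio is an integer. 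Evaluated weak pairs are unordered pairs of distinct elements of $B$, so there are at most $\binom{|B|}{2}\le N(N-1)/2$ of them.

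The main point needing care is that ``evaluated'' must match the bound comparison. The algorithm can pop pairs that are then skipped (islanding, or containing $m$), and the heap enumeration pushes successors that are never popped. Neither case adds evaluations, because every evaluation happens after a pop that passed the test $t>\max(\hat v,R_m)$. I also need the weak-stream bound $t$ to be the one attached to the popped weak pair, so that the inequality applies to that pair itself. Since the two streams are merged by bound, whenever a weak pair is popped its own bound equals $t$. Strong pairs are not counted in the statement, so they need no treatment.
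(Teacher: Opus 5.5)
Your proof is correct and follows the paper's argument. The termination test together with (i) gives $\varphi^0+a_k+a_j+\varrho>1$, then (ii) forces $a_k,a_j>\mu-\varrho$, and (iii) limits how many elements can qualify. Your extra checks on skipped pops and on the merged-stream bound equalling $t$ fill in details the paper leaves implicit.
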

\begin{proof}
An evaluated weak pair has $\varphi^0+a_k+a_j+\varrho>1$, which with (ii) forces $a_j>\mu-\varrho$ and $a_k>\mu-\varrho$. By (iii) at most $A^2/(\mu-\varrho)^2$ elements qualify.
\end{proof}

With $\hat S=\max_o|S_o|/\sqrt{1-\Phi_{oo}}$ one may take $A=\xi_m\hat S/R_m$ and $\varrho=2\omega\xi_m\hat S/R_m$, because $\sum_oQ_{mo}^2=Q_{mm}$. Then $N\le(2\xi_m\hat S/(\mu R_m))^2$ for $\omega=\mu R_m/(4\xi_m\hat S)$, which depends on how close the branch loads to its N-1 limit, not explicitly on $|\E|$. A tighter count is available before the search: at most $N(N-1)/2$ weak pairs are evaluated with $N=|\{o:w_o>R_m-\sigma f_m-\max_ow_o\}|$.

\section{Voltage Criteria: Linear Models With Low-Rank Outages}\label{sec:voltage}

From nodal analysis, the system matrix can be written as a sum of branch terms. In the DC model, where $L=\sum_ob_oa_oa_o\tr$ the term of branch $o$ is $b_oa_oa_o\tr$. A branch term $b_oa_oa_o\tr$ is low rank, nonzero only in the rows and columns of the branch terminals, and is removed from the sum under outage. In the DC model, the system matrix is symmetric positive definite, every branch term is positive semi-definite, and the injection released by an outage lies in the range of its branch term. This section proves the pair formula and a separable bound for every model with these properties and then constructs a model with voltage magnitudes. For the criterion $V_i\ge V^{\min}$ ($s=-1$) or $V_i\le V^{\max}$ ($s=+1$) at a PQ bus $i$, the value of a pair is $s\Delta V^{kj}_i$ and its threshold is $V^0_i-V^{\min}$ or $V^{\max}-V^0_i$, with $V^0$ the base voltage. A clean base case is assumed: criteria violated before the contingencies are excluded.

\subsection{Low-Rank Outage Models}

\begin{definition}\label{def:lrmodel}
A \emph{low-rank outage model} consists of a symmetric system matrix $H\succ0$ in $\R^{n\times n}$ and, for each element $o$, a branch term $U_o\Sigma_oU_o\tr$ (the sum of the terms of the branches it removes), with $U_o\in\R^{n\times r_o}$ and $\Sigma_o$ diagonal and non-singular, and a vector $z_o\in\R^{r_o}$. After outage set $O$ the state change solves
\begin{equation}\label{eq:lrmodel}
H_{-O}\,\Delta z^O=U_O\Sigma_Oz_O,\qquad H_{-O}=H-U_O\Sigma_OU_O\tr ,
\end{equation}
and monitored quantities are linear functionals $\ell\tr\Delta z$. The branch terms are \emph{semi-definite} if every $\Sigma_o\succ0$.
\end{definition}

The DC model is a low-rank outage model with semi-definite branch terms: $H$ is the grounded Laplacian, $U_o=a_o$, $\Sigma_o=b_o$, and $z_o=a_o\tr\theta$, so the right side of (\ref{eq:lrmodel}) is the injection $f_oa_o$ that cancels the flow of the removed branch.

\begin{theorem}[Pair formula and collapse test]\label{thm:lrpair}
Let $K_O=\Sigma_O^{-1}-U_O\tr H^{-1}U_O$, of order $r_k+r_j$. (i) The inertias satisfy $\mathrm{In}(K_O)=\mathrm{In}(\Sigma_O)+\mathrm{In}(H_{-O})-(n,0,0)$, so $H_{-O}\succ0$ exactly when $K_O$ has the inertia of $\Sigma_O$, and $H_{-O}$ is singular exactly when $K_O$ is. (ii) If $K_O$ is non-singular, $\Delta z^O=H^{-1}U_OK_O^{-1}z_O$.
\end{theorem}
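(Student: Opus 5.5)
Part (i) follows from a Haynsworth inertia computation on one bordered matrix, and part (ii) from solving the same block system.

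\textbf{Inertia identity.} With $U=U_O$ and $\Sigma=\Sigma_O$, consider the symmetric matrix of order $n+r$, $r=r_k+r_j$,
\begin{equation*}
M=\begin{bmatrix} H & U\\ U\tr & \Sigma^{-1}\end{bmatrix}.
\end{equation*}
We take Schur complements in the two orders.

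Eliminating the $(1,1)$ block $H\succ0$ gives $\mathrm{In}(M)=\mathrm{In}(H)+\mathrm{In}(\Sigma^{-1}-U\tr H^{-1}U)=(n,0,0)+\mathrm{In}(K_O)$.

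Eliminating the $(2,2)$ block $\Sigma^{-1}$, which is non-singular because $\Sigma$ is diagonal and non-singular, gives $\mathrm{In}(M)=\mathrm{In}(\Sigma^{-1})+\mathrm{In}(H-U\Sigma U\tr)=\mathrm{In}(\Sigma)+\mathrm{In}(H_{-O})$. Here $\mathrm{In}(\Sigma^{-1})=\mathrm{In}(\Sigma)$ since $\Sigma$ is diagonal.

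Equating the two expressions yields the identity in (i). Each congruence behind Haynsworth's theorem is an explicit block-triangular factorization, so the identity holds even when $H_{-O}$ or $K_O$ is singular, provided zero eigenvalues are counted in the inertia triple.

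\textbf{Consequences in (i).} Write $\mathrm{In}(\Sigma)=(\pi,\nu,0)$ with $\pi+\nu=r$. The zero counts of $K_O$ and $H_{-O}$ coincide, which is the singularity statement.

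The matrix $H_{-O}\succ0$ exactly when $\mathrm{In}(H_{-O})=(n,0,0)$. By the identity this is equivalent to $\mathrm{In}(K_O)=(\pi,\nu,0)=\mathrm{In}(\Sigma)$.

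For the converse direction, the identity forces $\pi_K=\pi+\pi_H-n$. So $\pi_K=\pi$ gives $\pi_H=n$, which means $H_{-O}$ is positive definite. In the semi-definite case this reduces to $K_O\succ0$, and this is the collapse test.

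\textbf{Formula (ii).} Introduce $y=\Sigma U\tr\Delta z$, so that (\ref{eq:lrmodel}) reads $H\Delta z-Uy=U\Sigma z_O$ together with $U\tr\Delta z-\Sigma^{-1}y=0$. Solving the first equation gives $\Delta z=H^{-1}U(\Sigma z_O+y)$. Substituting into the second yields $U\tr H^{-1}U(\Sigma z_O+y)=\Sigma^{-1}y$.

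Set $w=\Sigma z_O+y$. Then the equation becomes $K_Ow=\Sigma^{-1}(w-y)=z_O$, so $w=K_O^{-1}z_O$ and $\Delta z^O=H^{-1}UK_O^{-1}z_O$.

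This step needs $H_{-O}$ to be non-singular, so that $\Delta z^O$ is well defined. That is guaranteed by (i) once $K_O$ is non-singular. The same result can be cross-checked with Woodbury, $H_{-O}^{-1}=H^{-1}+H^{-1}UK_O^{-1}U\tr H^{-1}$, applied to $U\Sigma z_O$. The simplification $\Sigma+K_O^{-1}U\tr H^{-1}U\Sigma=K_O^{-1}(K_O+U\tr H^{-1}U)\Sigma=K_O^{-1}$ gives the same expression.

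\textbf{Main obstacle.} The only delicate point is stating the Haynsworth additivity in a form valid when the Schur complement is singular. We handle this by using the congruence factorizations directly rather than an invertibility-dependent formula. The rest is routine algebra.
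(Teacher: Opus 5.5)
Your proposal is correct and follows the paper's own proof. The paper applies Haynsworth's inertia formula to the same bordered matrix $[H\;U_O;\,U_O\tr\;\Sigma_O^{-1}]$, whose two Schur complements are $K_O$ and $H_{-O}$, and for (ii) uses the Woodbury identity, which your block elimination reproduces. The "main obstacle" you flag is milder than stated: Haynsworth's additivity needs only the eliminated block ($H$ or $\Sigma_O^{-1}$) to be invertible, so a singular $K_O$ or $H_{-O}$ causes no difficulty.
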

\begin{proof}
(i) is the Haynsworth inertia formula \cite{haynsworth1968} applied to $[H\;U_O;\,U_O\tr\;\Sigma_O^{-1}]$, whose Schur complements are $K_O$ and $H_{-O}$. (ii) is the Woodbury identity.
\end{proof}

Therefore a pair needs a single solve of order $r_k+r_j$, and an inertia count on the same matrix determines if the pair collapses the model. For an element with $H_{-o}\succ0$ write $K_{oo}=V_o\Lambda_oV_o\tr$, $D_o=V_o|\Lambda_o|^{-1/2}V_o\tr$, and $\Omega_o=V_o\,\mathrm{sign}(\Lambda_o)V_o\tr$, which is the identity for a semidefinite branch term. Define the outage stress $s_o=D_oz_o$, the exposure $\gamma_o=D_oU_o\tr H^{-1}\ell$ of a monitored quantity, and the coupling $G_{kj}=-D_kU_k\tr H^{-1}U_jD_j$ with $c_{kj}=\|G_{kj}\|_2$. Then $D_OK_OD_O=C=[\Omega_k\;G_{kj};\,G_{kj}\tr\;\Omega_j]$, the single change is $\ell\tr\Delta z^o=\gamma_o\tr\Omega_os_o$, and the pair change is $\gamma\tr C^{-1}s$ with $\gamma=(\gamma_k,\gamma_j)$ and $s=(s_k,s_j)$. In the DC model $c_{kj}$ is the cycle-space cosine and $s_o$ the outage stress of Section~\ref{sec:algebra}.

\begin{theorem}[Coupling and interaction]\label{thm:lrbound}
Let $H_{-k},H_{-j}\succ0$ and $c=c_{kj}$. (i) If both branch terms are semi-definite and $H_{-kj}\succ0$, then $c<1$. (ii) If $c<1$, then $H_{-kj}$ is non-singular and for every functional $\ell$
\begin{equation}\label{eq:lrbound}
\begin{aligned}
|\ell\tr(\Delta z^{kj}-\Delta z^k-\Delta z^j)|&\le\tfrac{c}{1-c}\|\gamma\|\,\|s\|,\\
|\ell\tr\Delta z^{kj}|&\le\tfrac{1}{1-c}\|\gamma\|\,\|s\|.
\end{aligned}
\end{equation}
(iii) No smaller constant than $c/(1-c)$ holds for all $\gamma$ and $s$.
\end{theorem}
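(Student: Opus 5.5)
The plan is to work entirely with the normalized matrix $C=D_OK_OD_O$ introduced before the statement and to treat it as a perturbation of a block-diagonal orthogonal matrix. Write $\Omega=\mathrm{diag}(\Omega_k,\Omega_j)$ and $E=[0\;G_{kj};\,G_{kj}\tr\;0]$, so that $C=\Omega+E$. Each $\Omega_o=V_o\,\mathrm{sign}(\Lambda_o)V_o\tr$ is symmetric and orthogonal with $\Omega_o^2=I$. The eigenvalues of $E$ are $\pm\sigma_i(G_{kj})$, possibly together with zeros, so $\|E\|_2=c$.

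First I would check that $D_o$ is well defined. Since $H_{-o}\succ0$, Theorem~\ref{thm:lrpair}(i) shows that $K_{oo}$ is non-singular, so $\Lambda_o$ has no zero entries. Hence $D_O=\mathrm{diag}(D_k,D_j)$ is invertible, and $K_O$ is singular exactly when $C$ is.

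For (i), both branch terms are semi-definite, so $\Omega=I$ and $\mathrm{In}(\Sigma_O)=(r_k+r_j,0,0)$. The hypothesis $H_{-kj}\succ0$ and Theorem~\ref{thm:lrpair}(i) then give $K_O\succ0$, and congruence by $D_O$ gives $C=I+E\succ0$. The smallest eigenvalue of $I+E$ is $1-\sigma_{\max}(G_{kj})=1-c$, so $c<1$.

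For (ii), factor $C=\Omega(I+\Omega E)$. Because $\Omega$ is orthogonal, $\|\Omega E\|_2=c<1$, so $I+\Omega E$ is invertible by a Neumann series and $\|(I+\Omega E)^{-1}\|_2\le 1/(1-c)$. Thus $C$ is non-singular, hence so is $K_O$, and Theorem~\ref{thm:lrpair}(i) makes $H_{-kj}$ non-singular. Theorem~\ref{thm:lrpair}(ii) then applies, and the pair change is $\gamma\tr C^{-1}s$. The sum of the two single changes is $\gamma\tr\Omega s$, by the formula $\ell\tr\Delta z^o=\gamma_o\tr\Omega_os_o$. Using $\Omega^{-1}=\Omega$,
\[
C^{-1}-\Omega=(I+\Omega E)^{-1}\Omega-\Omega=-(I+\Omega E)^{-1}\Omega E\,\Omega .
\]
This has norm at most $c/(1-c)$, and $\|C^{-1}\|_2\le\|(I+\Omega E)^{-1}\|_2\le1/(1-c)$. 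Cauchy--Schwarz then gives both inequalities of (\ref{eq:lrbound}).

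For (iii), it suffices to exhibit one admissible $C$, since $\gamma$ and $s$ range freely. I would take the semi-definite case $\Omega=I$ with top singular pair $G_{kj}v=cu$. Then $(u,-v)/\sqrt2$ is an eigenvector of $C$ with eigenvalue $1-c$, and an eigenvector of $C^{-1}-I$ with eigenvalue $1/(1-c)-1=c/(1-c)$. Choosing $\gamma=s$ equal to this unit vector attains the constant exactly, mirroring the $C_2^*$ construction in Theorem~\ref{thm:tight}.

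The step I expect to need the most care is the bookkeeping in (ii). One must confirm that the single-outage changes are exactly $\gamma_o\tr\Omega_os_o$ with the same $D_o$ used in $C$, and that non-singularity of $C$ transfers back through $D_O$ to $K_O$ and then to $H_{-kj}$ via the inertia formula. A related point is that the indefinite case needs the factorization through the orthogonal $\Omega$ rather than an eigenvalue argument on a symmetric $I+E$. Once that is settled, the norm estimate is a one-line Neumann-series bound.
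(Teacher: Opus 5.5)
Your proposal is correct and follows the paper's proof essentially step for step: (i) via $C=I+E\succ0$ from the inertia formula, (ii) via the factorization $C=\Omega(I+\Omega E)$ with $\|\Omega E\|_2=c<1$ and the identity $C^{-1}-\Omega=-(I+\Omega E)^{-1}\Omega E\,\Omega$, and (iii) via $\gamma=s$ equal to a unit eigenvector of $C$ for the eigenvalue $1-c$ in the case $\Omega=I$. Your extra bookkeeping fills in details the paper leaves implicit: invertibility of $D_O$ from the single-outage inertia count, and passing non-singularity of $C$ back to $K_O$ and then to $H_{-kj}$.
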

\begin{proof}
(i) By Theorem~\ref{thm:lrpair}, $C\succ0$, and a positive definite matrix with identity diagonal blocks has off-diagonal blocks of norm less than one. (ii) Let $\Omega=\mathrm{diag}(\Omega_k,\Omega_j)$, which is symmetric and orthogonal, and $\tilde G$ the off-diagonal part of $C$, so $C=\Omega(I+\Omega\tilde G)$ with $\|\Omega\tilde G\|=c<1$. Hence $C$ is invertible, $\|C^{-1}\|\le1/(1-c)$, and the interaction term $\gamma\tr(C^{-1}-\Omega)s=-\gamma\tr(I+\Omega\tilde G)^{-1}\Omega\tilde G\Omega s$ is at most $c\|\gamma\|\|s\|/(1-c)$ in magnitude. (iii) With $\Omega=I$, take $\gamma=s$ equal to a unit eigenvector of $C$ for the eigenvalue $1-c$.
\end{proof}

The bound is separable after one more step: $\|\gamma\|\|s\|\le\frac12\sum_{o\in\{k,j\}}(t\|\gamma_o\|^2+\|s_o\|^2/t)$ for every $t>0$. Because the DC system matrix is the sum of its branch terms, their normalized factors are orthonormal in the cycle space. That Gram structure gives the sharper constant of Proposition~\ref{prop:residual} and the budget of Theorem~\ref{thm:budget}. In general $H$ exceeds the sum of its branch terms by an indefinite matrix, and by (iii) the constant $c/(1-c)$ cannot be improved.

\subsection{A Symmetric Linear AC Model}\label{sec:emodel}

Let $x=(e,f)$ be the rectangular bus voltages, $V=e+\mathrm{j}f$, and let $Y_0$ be the bus admittance matrix with every series admittance $y$ replaced by $\mathrm{j}\,\mathrm{Im}\,y$ and all conductances removed. The matrix $B=-\mathrm{j}Y_0$ is Hermitian, which also holds with phase shifters. The network energy $w(x)=-\frac12V^{\mathsf H}BV$ is a real quadratic form, to which branch $o$ contributes $\frac12b_o|V_f/\tau_o-V_t|^2$ through its series element and a non-positive term through its charging. Define
\begin{equation}\label{eq:energy}
\mathcal W(x)=w(x)-\sum_{i\ne r}P_i\,\theta_i(x)-\sum_{i\in\mathrm{PQ}}Q_i\ln|V_i| ,
\end{equation}
with $r$ the slack bus and $P_i+\mathrm{j}Q_i=V_i\overline{(Y_0V)_i}$ at the solved AC base $x_b$, that is, the AC injections with the base conductance losses of each bus held fixed. Differentiation shows that the derivatives of $\mathcal W$ with respect to $\theta_i$ and $\ln|V_i|$ are the active and reactive mismatches of the lossless network, so $x_b$ is a critical point. Voltage-controlled buses carry the constraint $|V_i|=V_i^{\mathrm{set}}$. Let $z$ collect $(\Delta e_i,\Delta f_i)$ at PQ buses and the displacement along the circle $|V_i|=V_i^{\mathrm{set}}$ at voltage-controlled buses, with basis $P$, and let $H=P\tr\nabla^2\mathcal L(x_b)P$ for the Lagrangian $\mathcal L$ of these constraints.

Each $w_o$ is a quadratic form with a constant Hessian $K_o$ on the two terminals: positive semi-definite of rank two for the series element, negative semi-definite for the charging. Removing $O$ replaces $\mathcal W$ by $\mathcal W-\sum_{o\in O}w_o$, and one Newton step on its stationarity conditions from $x_b$ is $H_{-O}\Delta z=P\tr\sum_{o\in O}K_ox_b$, because $\nabla w_o(x)=K_ox$ and the constraint gradients are normal to the tangent space. With $K_o=\bar U_o\Sigma_o\bar U_o\tr$, $U_o=P\tr\bar U_o$, and $z_o=\bar U_o\tr x_b$ this is a low-rank outage model, called the \emph{energy model} below. The change of the voltage magnitude at a PQ bus is the functional $\Delta V_i=\mathrm{Re}(\bar V_i\Delta V_i)/|V_i|$, and the linearized active and reactive branch flows give thermal criteria in the same model.

\begin{proposition}[Definiteness]\label{prop:congruent}
$H=T\tr J_0T$, where $J_0$ is the Jacobian of the lossless mismatches with respect to $(\theta_i)_{i\ne r}$ and $(\ln|V_i|)_{i\in\mathrm{PQ}}$ at $x_b$, which is symmetric. $T$ is the non-singular derivative of these coordinates with respect to $z$. Hence $H\succ0$ exactly when $J_0\succ0$, and $H$ is singular exactly where the lossless power-flow Jacobian is.
\end{proposition}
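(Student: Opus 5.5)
The plan is to work with the function $\mathcal W$ in the coordinates $y=((\theta_i)_{i\ne r},(\ln|V_i|)_{i\in\mathrm{PQ}})$ and to compare its second derivative there with the constrained Hessian $H$ in the coordinates $z$. On the constraint manifold, which has $|V_i|=V_i^{\mathrm{set}}$ at voltage-controlled buses and the slack voltage fixed, these coordinates give a local chart $y(z)$. This chart is a smooth bijection. Its derivative $T=\partial y/\partial z$ at $x_b$ is non-singular, because at a PQ bus $(\Delta e_i,\Delta f_i)\mapsto(\Delta\theta_i,\Delta\ln|V_i|)$ is the inverse of the polar map scaled by $|V_i|\neq0$. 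At a voltage-controlled bus, the displacement along the circle maps to $\Delta\theta_i$ with factor $1/V_i^{\mathrm{set}}$. The derivative is block diagonal with invertible blocks.

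The next step is to establish the symmetry of $J_0$ and its relation to $\mathcal W$. The paper states that $\partial\mathcal W/\partial\theta_i$ and $\partial\mathcal W/\partial\ln|V_i|$ are the lossless active and reactive mismatches. Hence $J_0$ is the Hessian $\nabla_y^2\mathcal W(y_b)$ of the restricted function $\mathcal W(y)$. This makes $J_0$ symmetric. It should be checked that the constant injections $P_i,Q_i$ enter only linearly in $\theta$ and $\ln|V|$, so they do not spoil the identification. Then I would apply the chain rule to $\mathcal W\circ y$, restricted to the manifold:
\[
\nabla_z^2(\mathcal W\circ y)=T\tr J_0T+\sum_a \frac{\partial\mathcal W}{\partial y_a}\nabla_z^2 y_a .
\]
The second term vanishes at $x_b$ because $x_b$ is a critical point, so all mismatches are zero.

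The main obstacle is matching $\nabla_z^2(\mathcal W\circ y)$ with $H=P\tr\nabla^2\mathcal L(x_b)P$. This must be done with the correct multipliers, and it must account for the fact that $z$ parametrizes the circle only to first order. I would argue as follows. The Lagrangian Hessian projected onto the tangent space equals the second derivative of $\mathcal W$ along any curve on the manifold through $x_b$ with the given tangent. The curvature contribution of the curve is $\nabla\mathcal W\cdot\ddot x$. By stationarity, $\nabla\mathcal W$ is a combination of constraint gradients, and these are cancelled by the multiplier term $\lambda\nabla^2 g$; this is the standard second-order reduction. A parametrization by $z$ of $y$ on the manifold therefore gives the same quadratic form as $H$. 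This yields $H=T\tr J_0T$.

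Finally, Sylvester's law of inertia applied to the congruence with non-singular $T$ gives $\mathrm{In}(H)=\mathrm{In}(J_0)$. Hence $H\succ0$ if and only if $J_0\succ0$, and $\det H=0$ if and only if $\det J_0=0$, which is the singularity of the lossless power-flow Jacobian in these coordinates.
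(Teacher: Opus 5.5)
Your proposal is correct and follows the paper's argument. In the chart $(\theta,\ln|V|)$ the constraints fix coordinates, and the gradient of $\mathcal W$ is the mismatch vector, so $J_0$ is a symmetric Hessian; at a critical point the tangent-space Hessian of the Lagrangian transforms as a bilinear form under the chart change, with $T$ non-singular because $V\ne0$. You make explicit what the paper compresses: the vanishing of the first-order chain-rule term, the multiplier cancellation of the curvature term $\nabla\mathcal W\cdot\ddot x$, and Sylvester's law of inertia.
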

\begin{proof}
In the coordinates $(\theta,\ln|V|)$ the voltage constraints fix coordinates, and the gradient of $\mathcal W$ is the mismatch vector, so its Hessian $J_0$ is symmetric. At a critical point the Hessian of the Lagrangian on the tangent space transforms as a bilinear form under a change of chart, which gives $H=T\tr J_0T$, and $T$ is non-singular because $V\ne0$.
\end{proof}

The inertia of $H$ can only change where $H$ is singular, so $H\succ0$ holds along a loading path from a positive definite point up to the first saddle-node bifurcation. This is checked at the base of each system. The two traditional linear voltage models fall short because they each lack one of the three properties needed for this approach. In rectangular coordinates, one Newton step on the power-mismatch equations has at the base a non-orthogonal rescaling of the Hessian of $\mathcal W$ as its Jacobian, thus it is not symmetric even when lossless. In the IEEE 118-bus system, the normalized coupling exceeds one for a quarter of the branch contingency pairs. The decoupled $B''$ model is symmetric, but in this model the released reactive injection has a common part at both terminals outside the range of the branch term. In polar coordinates the branch Hessians pick up first-order terms. Hence, rectangular coordinates are used to make the network part of $\mathcal W$ exactly quadratic thus placing the released injection in the range of the branch term.

Removing the charging with the line makes $\Sigma_o$ indefinite.
Theorem~\ref{thm:lrbound}(ii) still applies whenever $c_{kj}<1$, which is known before the pair is solved. Pairs with $c_{kj}\ge1$ are treated as strong and evaluated. Algorithm~\ref{alg:search} runs per voltage criterion with base value zero, strong set $\{c_{kj}>\varepsilon\}$, weak scores $w_o=s\Delta V^o_i+\frac{\omega}{2}(t_i\|\gamma_o\|^2+\|s_o\|^2/t_i)$ with $\omega=\varepsilon/(1-\varepsilon)$ and any $t_i>0$, and strong caps from (\ref{eq:lrbound}), infinite when $c_{kj}\ge1$. Theorem~\ref{thm:correct} holds because it needs only valid bounds, and a pair's system is solved only when the pair is removed from the top of the priority queue.

\section{Adjustment Between Contingencies}\label{sec:redispatch}

After first contingency $k$, the adjustment $u$ (generation redispatch, phase-shifter angles, interruption of firm transfers, and load interruption) lies in a polyhedron $U_k$ defined by ramp and unit limits and power balance. With the PTDF matrix of the controls $H$, the post-$k$ flows are $f^k+H^{(k)}u$ with $H^{(k)}=H+d_{:,k}H_{k,:}$, and after the second outage
\begin{equation}
f^{k,u,j}_m=f^{k,u}_m+d^{(k)}_{mj}f^{k,u}_j,\qquad d^{(k)}_{mj}=\frac{d_{mj}+d_{mk}d_{kj}}{1-\kappa_{kj}} .
\end{equation}
Under a common adjustment, $k$ passes if one $u\in U_k$ satisfies every row $\sigma f^{k,u,j}_m\le R_m$. With a cost vector $\gamma$, in which load interruption carries a large penalty, the problem for $k$ is the linear program
\begin{equation}\label{eq:scrd}
\min_{u\in U_k}\ \gamma\tr u\quad\text{s.t.}\quad\sigma f^{k,u,j}_m\le R_m\quad\forall j\ne k,\ m,\ \sigma .
\end{equation}
The problems decouple over first contingencies, and are then solved by constraint generation: at a fixed $u$ the rows over $j$ are pair values in the post-$k$ network, whose cycle-space projector is the rank-one reduction $Q-q_kq_k\tr/Q_{kk}$. The most violated row is therefore found by Algorithm~\ref{alg:search}, and rows it does not reach are proven safe.

\begin{theorem}[Witness bound]\label{thm:helly}
Let $d=\dim\mathrm{aff}\,U_k$. (i) If (\ref{eq:scrd}) is infeasible, some set of at most $d+1$ rows is infeasible together with $U_k$. (ii) If (\ref{eq:scrd}) is feasible, some set of at most $d$ rows gives the same optimal value together with $U_k$.
\end{theorem}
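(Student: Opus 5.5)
The approach is to reduce both parts to Helly's theorem in the affine hull of $U_k$. Fix first contingency $k$ and set $E=\mathrm{aff}\,U_k$, of dimension $d$. Each row of (\ref{eq:scrd}) reads $\sigma f^{k,u,j}_m\le R_m$. It is affine in $u$, since $f^{k,u}=f^k+H^{(k)}u$ and $d^{(k)}_{mj}$ does not depend on $u$. So each row defines a closed half-space $h_{j,m,\sigma}\subset\R^{\dim u}$. Intersecting with $E$, the sets $C_{j,m,\sigma}=h_{j,m,\sigma}\cap U_k$ are convex subsets of a $d$-dimensional affine space. The family is finite because $\E$ and $\M$ are finite.

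\textbf{Part (i).} Suppose (\ref{eq:scrd}) is infeasible, so $\bigcap C_{j,m,\sigma}=\emptyset$. Apply Helly's theorem in $E\cong\R^d$ to the finite family $\{U_k\}\cup\{h_{j,m,\sigma}\cap E\}$ of convex sets. Some subfamily of at most $d+1$ members has empty intersection.

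If that subfamily contains $U_k$, it consists of $U_k$ and at most $d$ rows, and the claim follows. If it does not contain $U_k$, then at most $d+1$ half-spaces already have empty intersection on $E$, hence also with $U_k$. This gives the bound $d+1$ in either case.

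A slightly sharper route would exploit that $U_k$ is always kept. Consider the sets $C_{j,m,\sigma}$ directly as convex subsets of $E$; Helly gives $d+1$ of them with empty common intersection. Each of these already includes $U_k$, so at most $d+1$ rows suffice, which is exactly the statement.

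\textbf{Part (ii).} Suppose (\ref{eq:scrd}) is feasible with optimal value $v^*$. The linear program over a polytope attains its optimum, assuming $U_k$ is bounded by the ramp and unit limits. For each finite subset $R$ of rows, let $v(R)$ be the optimum over $U_k\cap\bigcap_{R}h$; note that $v(R)\le v^*$.

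Introduce the open half-space $O=\{u\in E:\gamma\tr u<v^*\}$. Then the family $\{O\}\cup\{C_{j,m,\sigma}\}$ has empty intersection in $E$, because no feasible point has cost below $v^*$. Helly in $\R^d$ yields at most $d+1$ members with empty intersection.

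\begin{itemize}
\item If $O$ is not among them, the rows alone would be infeasible with $U_k$, contradicting feasibility.
\item Hence $O$ is among them, together with at most $d$ rows $R$. Then no point of $U_k\cap\bigcap_R h$ has cost below $v^*$, so $v(R)\ge v^*$, and with $v(R)\le v^*$ equality holds.
\end{itemize}

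Helly's theorem applies to finite families of convex sets mixing open and closed members, since the finite version needs only convexity. If $\gamma$ is constant on $E$, then every $R$, including the empty set, works trivially.

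\textbf{Main obstacle.} The delicate step is ensuring that Helly's theorem is applied in $E$ of dimension $d$ rather than in the full control space. This requires intersecting every half-space with $E$, which preserves convexity. It also requires that $U_k$ is counted as a set in the family, or absorbed into each $C$, without costing an extra member.

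A second obstacle is attainment of the optimum. If $U_k$ were unbounded, I would instead argue via LP duality and Carathéodory's theorem. In that argument, a basic optimal dual solution of the LP restricted to $E$ has at most $d$ nonzero multipliers on the rows, and those rows reproduce $v^*$. This gives an alternative proof of (ii).
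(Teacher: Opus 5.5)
Your proof is correct. Part (i) is the paper's own argument: Helly's theorem in $E=\mathrm{aff}\,U_k$ applied to the convex sets $U_k\cap h_{j,m,\sigma}$.

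Part (ii) takes a genuinely different route. The paper fixes an optimal vertex. It uses the KKT conditions to write the projection of $-\gamma$ onto the direction space of $E$ as a conic combination of active row normals and active constraints of $U_k$. Carath\'eodory's theorem then keeps at most $d$ of these vectors, and the reduced multipliers certify that the same vertex is optimal for the reduced program. You instead adjoin the open sublevel set $O=\{u\in E:\gamma\tr u<v^*\}$ and apply Helly a second time, which is the standard combinatorial-dimension argument for linear programs.

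Your version uses one tool for both parts and needs no vertices, duality, or attainment of the optimum. It only uses that $v^*$ is finite; if $v^*=-\infty$, the empty set of rows already works. So the duality fallback you sketch for unbounded $U_k$ is not needed.

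The paper's version is constructive. The witness is read off as the rows carrying nonzero multipliers in a basic dual optimum, which is exactly how the witness sets in the adjustment case study are obtained. Helly, by contrast, only guarantees that some $d$ rows exist without identifying them.
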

\begin{proof}
Each row intersects $U_k$ in a convex subset of its affine hull, so (i) is Helly's theorem \cite{helly1923}. For (ii), at an optimal vertex the Karush--Kuhn--Tucker conditions express the projection of $-\gamma$ onto the direction space of $\mathrm{aff}\,U_k$ as a conic combination of active row normals and active constraints of $U_k$. Carath\'eodory's theorem reduces the combination to at most $d$ vectors, and the reduced multipliers prove optimality of the same vertex for the reduced program.
\end{proof}

The theorem bounds the second contingencies needed to show that $k$ fails, or to fix the adjustment cost; it does not reduce the work of showing that $k$ passes which is shown above. Passing with a common adjustment implies passing pair by pair, but not the converse. With $U_k=[-1,1]$, two second contingencies that require $u\ge\frac12$ and $u\le-\frac12$ each pass alone yet fail together.

\section{The AC Equations: Per-Pair Certificates and Open Gaps}\label{sec:ac}

The search excludes pairs it never evaluates, and every exclusion rests on a bound that is exact for a linear model. No such bound is presented for the AC equations; what remains for AC is a guarantee for each pair that is evaluated from the best-first search. Let $x$ be the rectangular bus voltages and $F_p$ the AC mismatch map after pair $p$, whose components are quadratic in $x$, so its second derivative $F_p''$ is constant.

For an evaluated pair the affine-covariant Newton--Kantorovich theorem \cite{deuflhard2004} gives such a guarantee. Let $J_p(x_c)$ be invertible, $s=-J_p(x_c)^{-1}F_p(x_c)$, $\eta=\|s\|$, and $\vartheta=\max_{\|\delta\|=1}\|J_p(x_c)^{-1}F_p''[\delta]\|$. If $h=\vartheta\eta\le\frac12$, then $F_p$ has a zero $x^*$ with
\begin{equation}\label{eq:nk}
\|x^*-x_c\|\le t^*=\frac{1-\sqrt{1-2h}}{\vartheta},\qquad \|x^*-(x_c+s)\|\le t^*-\eta .
\end{equation}
Because $F_p$ is quadratic, its Lipschitz condition holds globally, and $\vartheta\le\lambda_{\max}(\sum_aN_a\tr PN_a)^{1/2}$, where $P=J_p(x_c)^{-\mathsf T}J_p(x_c)^{-1}$ and $N_a$ are the constant Hessian slices. The radius bounds every bus voltage and, through the branch admittances, every branch apparent power. Section~\ref{sec:acresults} tests it on all pairs of the IEEE 118-bus system. The open gaps are:
\begin{enumerate}
\item \emph{No AC exclusion.} A uniform remainder of the form of Proposition~\ref{prop:residual} would require a bound on the AC Jacobian under a rank-two outage from single-outage quantities. The energy model of Section~\ref{sec:emodel} has this structure only for one Newton step from the base. The remainder of the AC equations beyond that step has no such bound, and the classical max-norm Kantorovich bound is vacuous on the IEEE 118-bus system (Section~\ref{sec:acresults}).
\item \emph{No link between the linear and the AC values.} Nothing bounds the difference between the DC flows or linearized voltages and the AC values of a pair that is not evaluated. Empirically the gap is small for flows, the energy model, and the power-mismatch linearization. The gap is large for the decoupled model (Section~\ref{sec:acresults}).
\item \emph{Large deviations and collapse.} Pairs far from the base state, including pairs with no AC solution, are the ones most understated by a model linearized at the base state. A guarantee for such pairs needs a bound on the solution set of $F_p$ over a region, not at a point.
\item \emph{Cost of the guarantee.} The guarantee (\ref{eq:nk}) needs one factorization of $J_p$ per pair, so it makes an AC candidate list exact, but does not reduce the number of AC solutions.
\item \emph{Discrete controls.} Generator reactive limits, tap changers, and switched shunts change $F_p$. Each switching state needs its own guarantee.
\end{enumerate}

\section{Case Studies}\label{sec:cases}

The best-first search was tested on five public systems distributed with MATPOWER \cite{zimmerman2011} and simulated through pandapower \cite{thurner2018}: the IEEE 118-bus system, the synthetic ACTIVSg200 and ACTIVSg2000 systems \cite{birchfield2017}, and the PEGASE 1354-bus and 2869-bus systems \cite{fliscounakis2013,josz2016}. $\E$ is formed from all non-bridge branches, all branches are monitored, and pairs with $1-\kappa_{kj}<10^{-9}$ are treated as islanding pairs.

Line ratings of the IEEE and PEGASE cases are not set for DC thermal studies (3\% to 4\% of PEGASE branches exceed their rating in the base case), so each branch was assigned the rating $R_m=1.25\max(F_m,F_{25})$, where $F_m$ is its largest flow over the N-1 outages and $F_{25}$ the 25th percentile of the $F_m$. The factor 1.25 was inherited from the related work performed by Kaplunovich \cite{kaplunovich2016}, while the floor was added to avoid unrealistically small ratings. To consider various loading scenarios, a loading factor $\lambda$ scales all injections with ratings fixed. Consequently, every case with $\lambda<1.25$ is N-1 secure. Section~\ref{sec:native} uses the native ratings of the ACTIVSg systems.

Brute force evaluated (\ref{eq:pair}) for every monitored branch and non-islanding pair. The unit of work is one evaluation of (\ref{eq:pair}) for one pair and one branch. Brute force evaluation determines both directions while the search may evaluate a flow once per direction, so the comparison is conservative for the search. Both methods are implemented in Python with NumPy and ran on a single core for accurate clock-time comparison. Brute force is vectorized, the search loop is interpreted, and times exclude the shared factorization. A default coupling value of $\varepsilon=0.02$ is used.

\subsection{Structure of the Pair Space}\label{sec:struct}

On every system the budget identity of Theorem~\ref{thm:budget} holds to within $3\times10^{-12}$. Strongly coupled pairs are a small fraction of all pairs, for example 6{,}904 of about one million in the PEGASE 1354 system at $\varepsilon=0.1$. The count (\ref{eq:count}) holds loosely, by a factor of 50 to 100.

Fig.~\ref{fig:residual} tests Proposition~\ref{prop:residual} on about 15{,}000 random triples $(k,j,m)$ per system. All samples lie below the bound and the upper envelope follows it closely with the largest ratio of interaction term to bound at 0.998 to 0.999 on the three smaller systems. On the PEGASE 2869-bus system the largest ratio was 0.65, because fewer than 0.3\% of the sampled pairs have $|c|>0.1$.

\begin{figure}[!t]
\centering
\includegraphics[width=\columnwidth]{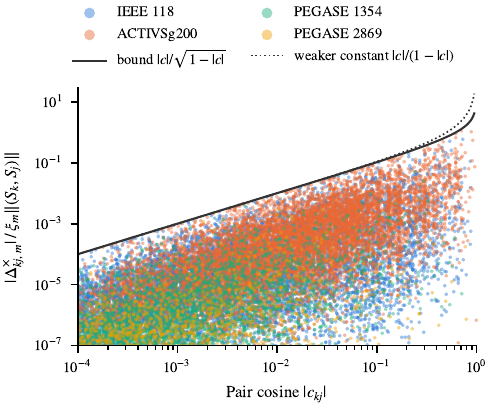}
\caption{Interaction term normalized by $\xi_m\|(S_k,S_j)\|$ versus the cycle-space cosine for random triples. The solid line is the bound of Proposition~\ref{prop:residual}; the dotted line is the weaker constant $|c|/(1-|c|)$ obtained from the entrywise form of (\ref{eq:pair}).}
\label{fig:residual}
\end{figure}

\subsection{Exactness and Size of the Minimum Set}

For the three smaller systems brute force was run for every evaluated loading factor and value of $\varepsilon$. The PEGASE 2869-bus system required $3.3\times10^{10}$ evaluations for a single loading factor and thus brute force was only performed for the reference loading. In every run the search returned the brute-force maximum for every violated criterion and found no violation elsewhere. At most six maximizing pairs were tied, and $|S^*|$ is the exact minimum hitting set of the tie sets as calculated by integer programming. Because the search returns one maximizer per criterion, its output is slightly larger (908 sequences against a minimum of 893 for PEGASE 2869).

Table~\ref{tab:sweep} shows that the minimum set is small. In the PEGASE 2869 system at $\lambda=0.8$, 893 sequences bound all 1{,}616 violated criteria out of approximately 7.2 million pairs. The set is smaller than the 1{,}736 violating pairs because a single pair is often the worst case for several branches. Table~\ref{tab:sweep} shows that the number of violating pairs grows faster than $|S^*|$ as the system is stressed. The ratio of violating pairs to $|S^*|$ rises from 1.6 to 3.5 in the PEGASE 1354-bus system.

\begin{table}[!t]
\centering
\caption{Effect of Loading ($\varepsilon=0.02$). For PEGASE 2869, Brute Force Was Run Only at $\lambda=0.8$}
\label{tab:sweep}
\footnotesize
\setlength{\tabcolsep}{3pt}
\begin{tabular}{lcrrrrr}
\toprule
System & $\lambda$ & Violated & Violating & $|S^*|$ & Search & Reduc- \\
 & & criteria & pairs & & evals. & tion \\
\midrule
IEEE 118 & 0.6 & 18 & 11 & 10 & 6{,}930 & 416 \\
 & 0.8 & 78 & 70 & 47 & 14{,}217 & 203 \\
 & 1.0 & 155 & 242 & 89 & 20{,}100 & 143 \\
\midrule
ACTIVSg200 & 0.6 & 50 & 45 & 20 & 13{,}659 & 264 \\
 & 0.8 & 120 & 132 & 48 & 22{,}813 & 158 \\
 & 1.0 & 184 & 344 & 70 & 28{,}640 & 126 \\
\midrule
PEGASE 1354 & 0.6 & 203 & 151 & 93 & 1{,}123{,}689 & 1{,}809 \\
 & 0.8 & 657 & 716 & 366 & 2{,}018{,}289 & 1{,}007 \\
 & 1.0 & 1{,}133 & 2{,}221 & 637 & 3{,}428{,}885 & 593 \\
\midrule
PEGASE 2869 & 0.8 & 1{,}616 & 1{,}736 & 893 & 14{,}922{,}670 & 2{,}221 \\
\bottomrule
\end{tabular}
\end{table}

\subsection{Search Effort}

Depending on the case the search needed approximately 130 to 3{,}800 times fewer pair evaluations than brute force (Table~\ref{tab:sweep}, Fig.~\ref{fig:reduction}). The reduction grows with system size, from about 200 for the 118-bus system to 2{,}200 for the PEGASE 2869 system at $\lambda=0.8$, and shrinks as the system is stressed. Before evaluation approximately 5\% to 30\% of criteria were discarded per the immunity test (\ref{eq:immunity}). The search also forms $|\E|$ scores and one cap per strong pair for each criterion, $1.1\times10^6$ bound computations on the IEEE 118-bus system and $7.0\times10^8$ on PEGASE 2869. In wall-clock time it matches the vectorized brute force on the two small systems and is about 5, 23, and 7 to 9 times faster on PEGASE 1354, PEGASE 2869, and ACTIVSg2000.

\begin{figure}[!t]
\centering
\includegraphics[width=\columnwidth]{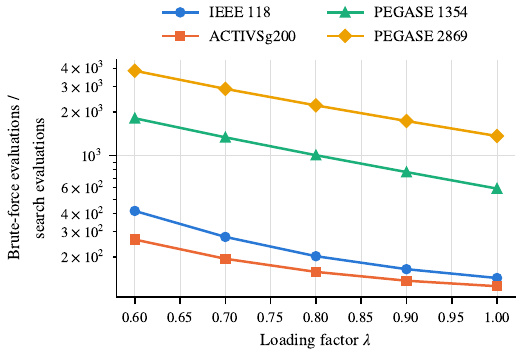}
\caption{Ratio of brute-force to search evaluations versus loading factor ($\varepsilon=0.02$).}
\label{fig:reduction}
\end{figure}

Fig.~\ref{fig:eps} shows the effect of $\varepsilon$: every value tested returned identical worst-case values, and the effort is lowest near $\varepsilon=0.01$--$0.02$. The evaluations per searched criterion grow from about 40 on the 118-bus system to about 2{,}000 on the PEGASE 2869 system. The bound of Proposition~\ref{thm:work} held for every criterion at $\varepsilon=0.1$ but is loose by orders of magnitude, and the computable count after it by a median factor of 6 to 80.

\begin{figure}[!t]
\centering
\includegraphics[width=\columnwidth]{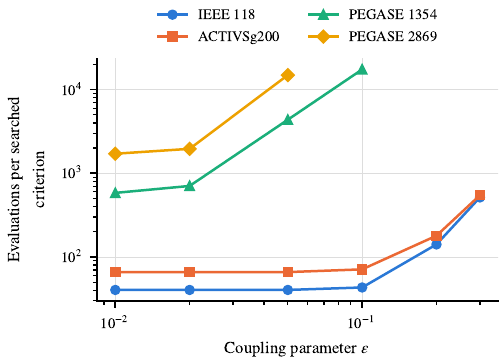}
\caption{Evaluations per searched criterion versus $\varepsilon$ at $\lambda=0.8$. The worst-case values are identical for every $\varepsilon$.}
\label{fig:eps}
\end{figure}

\subsection{Comparison With the Filter of \texorpdfstring{\cite{kaplunovich2016}}{Kaplunovich and Turitsyn}}\label{sec:kt}

The filter as described in Kaplunovich's work was implemented for comparison \cite{kaplunovich2016}. Differing from the best-first search that finds the worst pair of each violation criterion, the filter aims to list all violating pairs by removing monitored branches and second outages shown safe by LODF-derived bounds until neither set changes. The remaining set is then evaluated exactly. The filter found every violating pair for each system and loading condition.

Neither method needs fewer evaluations in every case: the filter is better at light loading and the search under stress. At $\lambda=0.6$ the filter performs between a third and half as many evaluations as the search on the two small systems, and roughly one-eighth as many on PEGASE 1354. It gains by removing entire first contingencies that bring no branches near their rating. However, the number of candidate sets from the filter grows quickly as system loading increases: at $\lambda=1.0$ it needs 24, 25, and 82 times more evaluations than the search on the IEEE 118, ACTIVSg200, and PEGASE 1354 systems, and 83 times more on PEGASE 2869 at $\lambda=0.8$. On the PEGASE systems the speed was more comparable to brute force (54 to 72~s on PEGASE 1354, 1{,}355~s on PEGASE 2869), while the search took 6 to 14~s and 64~s respectively. Counting bound computations, the search needed 2.5 to 84 times fewer operations than the filter.

\subsection{Native Ratings, Several Study Cases, and Ranking Rules}\label{sec:native}

The synthetic ratings guarantee an N-1 margin; as a check, the native RATE\_A of the ACTIVSg systems was used instead. On the ACTIVSg200 system at $\lambda=1.0$, 1.1, and 1.2, the minimum sets had 6, 12, and 15 sequences and the search needed 1{,}187 to 2{,}151 times fewer evaluations than brute force. On the ACTIVSg2000 system at $\lambda=1.0$, nine branches are already overloaded by some single outage, so Proposition~\ref{thm:work} does not apply to them. There, 673 sequences bound 961 violated criteria. The search needed 5.6 million evaluations, approximately 2{,}200 times fewer than brute force, and ran about 9 times faster, while the filter needed $8.5\times10^8$. At $\lambda=1.1$, where every non-islanding pair violates some criterion, 885 sequences bound all criteria, and the search remained exact.

For four study cases per system (loading factors 0.6, 0.8, and 1.0, and an alternative dispatch at $\lambda=0.8$), the envelope set (or minimum hitting set) over all cases had 109, 78, and 758 sequences for the IEEE 118, ACTIVSg200, and PEGASE 1354 systems against sums of the per-case sets of 210, 192, and 1{,}524. The search remained exact. To quantify the ranking-rule effect, each maximizing pair at $\lambda=0.8$ was compared against the ranking of single-outage impacts on the same branch: a rule that pairs the ten most severe single outages of each branch would not identify the worst sequence for 6 of 78 violated criteria in the IEEE 118-bus system, 40 of 120 in the ACTIVSg200 system, and 51 of 657 in the PEGASE 1354 system. In the PEGASE 1354 system, one worst pair contains the element traditionally ranked 1{,}428th of 1{,}429 by its single impact.

\subsection{Adjustment Between Contingencies}

Problem (\ref{eq:scrd}) was solved for every first contingency of the IEEE 118-bus system with the 54 generators as controls, each limited to 10\% of its capacity, minimizing the total redispatch in MW. A second run implemented load interruption at the 99 load buses at 1{,}000 per MW. The witness of a secured first contingency is the set of second contingencies containing nonzero multipliers. For a first contingency that fails, the witness is an irreducible, infeasible subsystem found by a deletion filter \cite{chinneck2008}. At $\lambda=0.8$, 0.9, and 1.0, 100, 69, and 44 first contingencies pass without adjustment. Redispatch secured 54, 70, and 65 first contingencies respectively, while 23, 38, and 68 fail. No first contingency needed more than four second contingencies for a pair to be determined with redispatch only, against a Helly bound of 54. Five at $\lambda=1.0$ fail with a common adjustment although each second contingency in the witness passes alone. Load interruption secured all but 4, 7, and 17 of the failures with at most 144~MW interrupted. Remaining failures are ramp-limited as every first contingency was secured with unrestricted ramping. Witness sets increased to at most eight second contingencies, against a Helly bound of 153.

\subsection{Voltage Criteria}\label{sec:vresults}

The energy model was constructed at the solved AC case of each system at $\lambda=1.0$, with line charging included in line removal and $V^{\min}=0.95$ and $V^{\max}=1.05$~p.u.\ at the PQ buses. Brute force evaluated Theorem~\ref{thm:lrpair}(ii) for every PQ bus and non-islanding pair. The pair formula agreed with direct solution of (\ref{eq:lrmodel}) to $10^{-15}$. $H$ was positive definite on each system, with smallest eigenvalue between 0.011 (ACTIVSg2000) and 0.198 (IEEE 118). The inertia test Theorem~\ref{thm:lrpair}(i) flagged no pair on three systems, one single outage with its 1{,}427 pairs and one pair of parallel lines on PEGASE 1354, and 12 pairs on ACTIVSg2000. For every flagged outage, Newton's method from the base fails in AC. Flagged pairs and base case violations were excluded from both brute force and search.

Theorem~\ref{thm:lrbound} was checked on every remaining pair and every PQ bus, roughly $2.4\times10^{10}$ pair--bus combinations. The bound held up to rounding ($10^{-13}$~p.u.), and the largest ratio of interaction term to bound was 0.96 to 0.9999996. For most pairs the coupling is weak: the median of $c_{kj}$ is 0.0004 to 0.014 and its 99th percentile 0.03 to 0.56, and $c_{kj}\ge1$ occurred only for flagged pairs.

\begin{table*}[!t]
\centering
\caption{Voltage Criteria in the Energy Model; the Sets Bound the Model, Not AC. $\lambda=1.0$, $V^{\min}=0.95$, $V^{\max}=1.05$~p.u., $\varepsilon=0.02$. ``Solved'' Is the Share of Pairs Whose System of Theorem~\ref{thm:lrpair} the Search Solved. Search Time Includes the Couplings of All Pairs. Times in Seconds}
\label{tab:voltage}
\setlength{\tabcolsep}{3.4pt}
\begin{tabular}{lrrrrrrrrrrrr}
\toprule
System & PQ & Violated & Violating & $|S^*_V|$ & Strong & Solved & Search & Brute-force & Reduction & Search & Brute-force & Mis- \\
 & buses & criteria & pairs & & pairs & & evaluations & evaluations & factor & time & time & matches \\
\midrule
IEEE 118 & 64 & 34 & 3{,}205 & 25 & 25.1\% & 28\% & $5.9\times10^{3}$ & $9.9\times10^{5}$ & 168 & 0.4 & 0.2 & 0 \\
ACTIVSg200 & 162 & 119 & 1{,}130 & 25 & 42.6\% & 5\% & $9.2\times10^{3}$ & $2.4\times10^{6}$ & 259 & 0.3 & 0.2 & 0 \\
PEGASE 1354 & 1{,}094 & 418 & 218{,}458 & 162 & 3.8\% & 18\% & $3.4\times10^{5}$ & $1.1\times10^{9}$ & 3{,}273 & 22.8 & 61.6 & 0 \\
ACTIVSg2000 & 1{,}611 & 1{,}205 & 383{,}993 & 532 & 6.4\% & 3\% & $2.6\times10^{5}$ & $6.1\times10^{9}$ & 23{,}680 & 68.3 & 427.2 & 0 \\
PEGASE 2869 & 2{,}359 & 960 & 1{,}560{,}134 & 344 & 1.7\% & 12\% & $1.2\times10^{6}$ & $1.7\times10^{10}$ & 14{,}690 & 117.9 & 1181.8 & 0 \\
\bottomrule
\end{tabular}
\end{table*}

Results are reported in Table~\ref{tab:voltage}. The search reproduced the brute-force worst value and full tie set of every violated voltage criterion, and reported no violation elsewhere, on every system at $\lambda=1.0$, at $\lambda=0.8$ on the two smaller systems, and for $\varepsilon = 0.02$, 0.05, and 0.1 on the IEEE 118-bus system. The search needed 170 to 23{,}700 times fewer pair--bus evaluations, and it solved the system of Theorem~\ref{thm:lrpair} for only 3\% to 28\% of the pairs because the weak scores and coupling require only entries of $H^{-1}$ and single-outage quantities. In contrast, a linearization without an a priori bound must solve every pair first. The identified minimum sets are small, 25 to 532 sequences for 34 to 1{,}205 violated criteria. With at most two maximizing pairs, ties were rare. The search was 2.7 to 10 times faster than the vectorized brute force on the three large systems. The strong set represents a quarter to two-fifths of the total pairs on the two small systems, but only a few percent of the pairs on the large systems.

\subsection{The AC Equations}\label{sec:acresults}

Every non-islanding pair of the IEEE 118-bus system was solved with the full AC equations at $\lambda=0.8$ and $1.0$ (loads and generation scaled by $\lambda$, generators at fixed voltage set points without reactive limits, constant-power loads, fixed taps), with MVA ratings from the rule of Section~\ref{sec:cases} applied to the AC N-1 apparent power. The guarantee (\ref{eq:nk}) was applied at the linear prediction one Newton step from the base, with additional Newton steps where $h>\frac12$, and the intervals were adjusted until each criterion had a single candidate worst pair.

Three findings relate directly to the gaps of Section~\ref{sec:ac}. Primarily, the per-pair guarantee is applicable and tight: it held for every pair but one (at the linear prediction for 7{,}118 and 5{,}161 pairs at the two loadings, after one to three Newton steps for the rest), the guaranteed radius exceeded the actual error by a median of 1--2\%, and never by more than a factor of two. The guarantees alone reproduced the AC minimum sets: 41 and 77 thermal, and 12 and 22 low-voltage sequences. The one pair lacking a guarantee, the loss of lines from bus 45 to buses 46 and 49 at $\lambda=1.0$, indicates voltage collapse as Newton's method failed. The classical max-norm remainder is vacuous: its Lipschitz constant is $\ell=3{,}095$, and on 200 random pairs the Kantorovich quantity had a median of 52, with only one pair below $\frac12$.

Second, for branch loadings the DC worst pair is typically the AC worst pair. The AC worst pair matched rank 1 of the DC ordering for 60 of 67 and 109 of 136 violated branches, and within the top five for all but one branch. The exception, sitting at DC rank 168, is an overload driven by voltage depression as opposed to active power redistribution. Where the DC pair differs, it can understate the AC worst loading by up to 21\%. The energy model also carries thermal criteria (Theorem~\ref{thm:lrbound} held, ratio up to 0.98), but it ranks the AC worst pair first for 59 of 67 and 98 of 136 violated branches, against the DC model's 60 and 109.

Third, none of the three linear voltage models bounds the AC voltages, and the energy model is the closest (Table~\ref{tab:vac}). Evaluation was performed on the AC solutions of all pairs, on the criteria of Section~\ref{sec:vresults}. For the two loading conditions, the energy model misses 6 of the 53 AC-violated criteria, against 10 and 26 for the power-mismatch linearization and the decoupled model respectively. The energy model flags 13 criteria that AC does not violate, performing conservatively. The energy model's correlation with the AC voltage changes of 0.97 to 0.98 is close to that of the power-mismatch linearization. Both outperform the decoupled model in this regard as it omits the reactive losses of the redistributed active power. Half of the energy model's misses involve the loss of the transformer from bus 5 to bus 8, which produces a strongly nonlinear response that also causes the power-mismatch linearization to fail. Remaining misses are understated overvoltages.

The largest difference is near system collapse. The loss of lines from bus 45 to buses 46 and 49 depresses bus 45 by 0.22~p.u.\ at $\lambda=0.8$ and has no AC solution at $\lambda=1.0$. The energy model predicts drops of 0.36 and 0.61~p.u.\ and gives this pair the largest coupling of the system, $c_{kj}=0.85$ and 0.87, while the power-mismatch linearization predicts 0.079~p.u.\ and the decoupled model 0.006~p.u. Thus, the energy model overstates the pair most understated by the traditional methods. Every outage the inertia test flags on PEGASE 1354 and ACTIVSg2000 is one for which Newton's method fails.

\begin{table}[!t]
\centering
\caption{Linear Voltage Models Against the AC Solutions of All Pairs, IEEE 118-Bus System ($V^{\min}=0.95$, $V^{\max}=1.05$~p.u.)}
\label{tab:vac}
\footnotesize
\setlength{\tabcolsep}{2.5pt}
\begin{tabular}{p{4.6cm}cc}
\toprule
 & $\lambda=0.8$ & $\lambda=1.0$ \\
\midrule
AC-violated (low/high); AC $|S^*_V|$ & 23 (21/2); 13 & 30 (29/1); 23 \\
\midrule
\multicolumn{3}{l}{\emph{Energy model; violated / $|S^*_V|$: 26/19 and 34/26}} \\
AC violations missed / false flags & 3 / 6 & 3 / 7 \\
AC worst pair at rank 1 / in top 5 & 20 / 21 & 23 / 26 \\
Correlation of $\Delta V$ with AC & 0.980 & 0.967 \\
\midrule
\multicolumn{3}{l}{\emph{Power-mismatch linearization; 22/14 at both}} \\
AC violations missed / false flags & 1 / 0 & 9 / 1 \\
AC worst pair at rank 1 / in top 5 & 20 / 22 & 24 / 26 \\
Correlation of $\Delta V$ with AC & 0.985 & 0.975 \\
\midrule
\multicolumn{3}{l}{\emph{Decoupled model; 13/8 and 17/11}} \\
AC violations missed / false flags & 11 / 1 & 15 / 2 \\
AC worst pair at rank 1 / in top 5 & 18 / 21 & 19 / 23 \\
Correlation of $\Delta V$ with AC & 0.836 & 0.769 \\
\bottomrule
\end{tabular}
\end{table}

\section{Discussion and Limitations}\label{sec:discussion}

The bounding sets presented in this paper are properties of linear models for thermal and voltage criteria. Against nonlinear (AC) models they are strong ordered candidate lists, not bounds. The guarantee (\ref{eq:nk}) makes such a list exact. For the linear model an a priori theory is determined from the form: the energy model and the traditional power-mismatch linearization provide similar AC prediction, but only the energy model has a symmetric system matrix and semi-definite branch terms whose ranges contain the released injection. That form yields the coupling cosine, the separable bound, and the collapse test.

The findings suggest a route to the AC equations that maintains the guarantees: the AC value of a contingency pair is its energy-model value plus a remainder. Because the network part of $\mathcal W$ is exactly quadratic, the remainder comes only from the bus-local load and voltage-control terms. On the IEEE 118-bus system the largest voltage error of a pair grows with the a priori quantity $\|s\|/\mu$, with $\mu$ the smallest eigenvalue magnitude of $C$. Inspecting the deciles of this quantity, the median error rises from $3\times10^{-4}$ to $5\times10^{-3}$ and $8\times10^{-3}$~p.u.\ at the two loadings (rank correlation 0.45 and 0.49, against 0.15 and 0.16 for $c_{kj}$ alone), and the pair from bus 45 ranks in the top ten of 15{,}502. The correlation is moderate, so a uniform bound in these quantities alone would be loose. A remainder bound that also uses the exposures $\gamma_o$ would let Algorithm~\ref{alg:search} exclude AC pairs, with the guarantee (\ref{eq:nk}) certifying the rest. A ranking fitted to AC brute-force results could order that verification, but it cannot certify the pairs it skips. Thus solely fitting AC brute-force results likely cannot supply the documented rationale required by planning standards.

Mirroring the DC model, the energy model drops conductances. A lossy version of this model is not symmetric and hence a bound for it is open. There is no a priori count of strongly coupled voltage pairs, and for flows the count is loose. In the energy model, $c_{kj}<1$ is checked rather than guaranteed for outages that remove line charging. This work evaluates only branch outages. Shunt-device outages fit the energy model form as bus local terms, but sequences including them were not studied here, nor were DC poles. The AC study ignores reactive limits, tap changes, and switched shunts, and the adjustment study was only performed on a single system. Future work should extend the adjustment study to more systems and control models.

\section{Conclusion}\label{sec:conclusion}

In linear network models, the smallest set of N-1-1 branch-outage pairs, bounding for all thermal and voltage criteria, can be computed exactly without evaluating every pair. The interaction between two outages is bounded by a coupling cosine in a DC model for thermal criteria evaluation and a lossless energy model for voltage criteria. In each model, outages remove semi-definite branch terms from a symmetric positive definite system matrix. A best-first search built on these bounds is exact and instance-optimal. The impacts of post-first contingency system adjustment are explored and tested, with Helly's theorem limiting the second contingencies that determine each first contingency under the allowed system adjustment. On public systems of up to 2{,}869 buses, the search reproduced the brute-force worst cases with up to 3{,}800 times fewer pair evaluations for thermal criteria and up to 23{,}700 times fewer pair--bus evaluations for voltage criteria. Comparison against AC shows that the search-determined sets are strong candidate lists but cannot be guaranteed. The suggested route to excluding AC pairs without evaluation is the creation of a remainder bound in the pair quantities of the energy model.

\bibliographystyle{IEEEtran}
\bibliography{references}
\end{document}